\documentclass[11pt]{article}

\usepackage{amsmath,amssymb,amsthm,mathtools}
\mathtoolsset{mathic=true}

\usepackage[T1]{fontenc}
\usepackage{libertinus}
\usepackage{libertinust1math}
\usepackage[narrow,varqu,varl,scaled=0.95]{zi4}

\usepackage[margin=1in]{geometry}

\usepackage{csquotes}
\usepackage[USenglish]{babel}
\usepackage{microtype}
\usepackage{enumitem}
\usepackage[only,llbracket,rrbracket]{stmaryrd}
\usepackage{tikz}
\usetikzlibrary{arrows.meta,positioning}
\usepackage[hidelinks,hypertexnames=false]{hyperref}
\usepackage[nameinlink,capitalize,noabbrev]{cleveref}
\usepackage[giveninits=true,maxbibnames=99,style=alphabetic,maxalphanames=4,minalphanames=3,isbn=false,maxcitenames=99,backref=true]{biblatex}
\AtEveryBibitem{%
  \clearlist{language}%
  \clearlist{location}
}
\AtBeginBibliography{%
  \setlength{\emergencystretch}{2em}%
}
\DefineBibliographyStrings{english}{
  backrefpage  = {page},
  backrefpages = {pages},
}

\AtEveryBibitem{ %
    \clearfield{day}
    \clearfield{month}
    \clearfield{series}
    \clearfield{venue}
    \clearname{editor}
    \clearlist{publisher}
    \clearlist{location} %
    \clearfield{venue}
    \clearfield{issn}
    \clearfield{isbn}
    \clearfield{urldate}
    \clearfield{eventdate}
    \clearfield{pages}
    \clearfield{number}
    \clearfield{volume}
}

\usepackage{xurl}
\usepackage[dvipsnames]{xcolor}
\usepackage{ninecolors}
\NineColors{saturation=high}

\hypersetup{
  colorlinks=true,
  linkcolor=blue2,
  citecolor=blue2,
  urlcolor=blue2,
  hypertexnames=false
}

\hypersetup{
  pdftitle={QMA has perfect completeness},
  pdfauthor={Sabee Grewal and Dorian Rudolph}
}
\allowdisplaybreaks
\setlist{
    itemsep=2pt plus .5pt minus .5pt,
    topsep=4pt plus 1pt minus 1pt,
    parsep=0pt,
    partopsep=0pt,
    labelindent=.5em,
    leftmargin=*
  }

\usepackage{aliascnt}

\newaliascnt{lemma}{theorem}
\newtheorem{lemma}[lemma]{Lemma}
\aliascntresetthe{lemma}
\newaliascnt{corollary}{theorem}
\newtheorem{corollary}[corollary]{Corollary}
\aliascntresetthe{corollary}
\newaliascnt{definition}{theorem}
\newtheorem{definition}[definition]{Definition}
\aliascntresetthe{definition}
\theoremstyle{definition}
\newaliascnt{remark}{theorem}

\aliascntresetthe{remark}
\crefname{theorem}{Theorem}{Theorems}
\crefname{lemma}{Lemma}{Lemmas}
\crefname{corollary}{Corollary}{Corollaries}
\crefname{definition}{Definition}{Definitions}
\crefname{remark}{Remark}{Remarks}
\usepackage{thm-restate}

\usepackage[
  font=small,
  labelfont=bf,
  labelsep=period
]{caption}

\newcommand{\QIP}{\mathsf{QIP}}
\newcommand{\preciseQMA}{\mathsf{PreciseQMA}}
\newcommand{\QMA}{\mathsf{QMA}}
\newcommand{\QMAone}{\mathsf{QMA}_1}
\newcommand{\QCMA}{\mathsf{QCMA}}

\newcommand{\MA}{\mathsf{MA}}
\newcommand{\BQP}{\mathsf{BQP}}

\newcommand{\Z}{{\mathbb{Z}}}
\newcommand{\calG}{{\mathcal{G}}}
\newcommand{\calR}{{\mathcal{R}}}
\newcommand{\Gtwo}{\mathcal{G}_{2}}
\newcommand{\poly}{\mathrm{poly}}
\newcommand{\Tr}{\operatorname{Tr}}
\newcommand{\ket}[1]{\lvert #1\rangle}
\newcommand{\bra}[1]{\langle #1\rvert}

\newcommand{\ketbra}[2]{\lvert #1\rangle\!\langle #2\rvert}
\newcommand{\proj}[1]{\lvert #1\rangle\!\langle #1\rvert}
\newcommand{\norm}[1]{\lVert #1\rVert}
\newcommand{\CNOT}{\operatorname{CNOT}}
\newcommand{\Toffoli}{\operatorname{Toffoli}}
\newcommand{\acc}{\mathrm{acc}}
\newcommand{\rej}{\mathrm{rej}}
\newcommand{\yes}{\mathrm{yes}}
\newcommand{\no}{\mathrm{no}}
\newcommand{\lmax}{\lambda_{\max}}
\newcommand{\calO}{\mathcal{O}}
\newcommand{\regX}{\mathcal{X}}
\newcommand{\regY}{\mathcal{Y}}
\newcommand{\regZ}{\mathcal{Z}}
\newcommand{\regJ}{\mathcal{J}}
\newcommand{\regC}{\mathcal{C}}

\usetikzlibrary{quantikz2}

\begin{document}
\title{$\QMA$ has perfect completeness}
\author{%
  Sabee Grewal\thanks{IBM Research and Columbia University. \href{mailto:sabee@ibm.com}{\texttt{sabee@ibm.com}}}
  \and
  Dorian Rudolph\thanks{Paderborn University. \href{mailto:mail@dorianrudolph.com}{\texttt{mail@dorianrudolph.com}}}
}
\date{}
\maketitle

\begin{abstract}
We prove $\QMA = \QMAone$, i.e., every quantum Merlin--Arthur proof system can be made perfectly complete. 
Our construction uses only Hadamard, Toffoli, and \(X\) gates, yielding a universal gate set for \(\QMAone\).
As a consequence, quantum \(3\)-SAT is \(\QMA\)-complete.
The construction relativizes to classical oracles, so known classical-oracle separations of \(\QMA\) from \(\QCMA\) extend to \(\QMAone\).
\end{abstract}

\section{Introduction}\label{sec:intro}

In the standard Merlin--Arthur setup, an all-powerful prover Merlin sends a proof to an efficient verifier Arthur, who must decide whether to accept or reject an input.
Such a proof system has completeness \(c\) and soundness \(s<c\) if every YES instance admits a proof that Arthur accepts with probability at least \(c\), while every purported proof on a NO instance is accepted with probability at most \(s\).
Thus, when \(c<1\), even an honest prover may be rejected with nonzero probability.
A natural question is whether, without loss of computational power, these proof systems can be made \emph{perfectly complete}---that is, whether one can take \(c=1\), so that every YES instance admits a proof that the verifier accepts with certainty.

This work concerns the complexity class \(\QMA\) (Quantum Merlin--Arthur), in which Merlin sends a polynomial-size quantum proof to an efficient quantum verifier Arthur. 
Its perfectly complete variant \(\QMAone\) was introduced by Bravyi in his study of quantum \(k\)-SAT~\cite{Bra06}.
In quantum $k$-SAT, one is given $k$-local projectors and asked whether they have a common zero-energy state, or whether every state has energy at least an inverse polynomial.
The problem naturally has perfect completeness, because, in a YES instance, a satisfying state violates none of the constraints and can therefore be accepted with certainty.
Bravyi showed that quantum $k$-SAT is $\QMAone$-complete for \(k\ge4\), and Gosset and Nagaj later proved the same for \(k=3\)~\cite{GN13}.
Whether the requirement of perfect completeness changes the computational power of \(\QMA\)---that is, whether \(\QMA=\QMAone\)---has remained open since Bravyi introduced $\QMAone$ twenty years ago. 

This state of affairs is particularly striking because perfect completeness is known for several closely related proof systems.
Both \(\MA\), where the proof is classical and the verifier is randomized, and \(\QCMA\), where the proof is classical and the verifier is quantum, are unchanged by requiring perfect completeness~\cite{ZF87,GZ11,JKNN12}.
The same is true of \(\QIP\), which allows multiple rounds of quantum interaction between the prover and verifier~\cite{KW00,KLN15}, and of \(\preciseQMA\), the variant of \(\QMA\) with an inverse-exponentially small completeness--soundness gap~\cite{fefferman_et_al:LIPIcs.ITCS.2018.4}.

A barrier to proving \(\QMA=\QMAone\) was discovered shortly after Bravyi introduced the class.
Aaronson~\cite{Aar09} constructed a quantum oracle relative to which \(\QMA\ne\QMAone\), showing that any proof of equality must exploit structure of \(\QMA\) verifiers that is unavailable in the black-box setting.
More recently, Aaronson, Harris, and Witteveen~\cite{AHW25} made this barrier quantitative by showing that black-box amplification using polynomial resources cannot, in general, reduce the completeness error below doubly exponentially small.
This matches the black-box amplification of Jeffery and Witteveen~\cite{JW26}, which achieves doubly exponentially small completeness error.

Despite these barriers, we resolve the \(\QMA\) versus \(\QMAone\) question affirmatively.
More strongly, perfect completeness can be achieved using a single fixed gate set with rational matrix entries.
Consider the gate set
\[
\Gtwo=\{X,\CNOT,\Toffoli,H\otimes H\}
\]
studied by Amy et al.~\cite{AGKMMR24}, and let \(\QMAone^{\Gtwo}\) denote the class \(\QMAone\) with the verifier restricted to gates from \(\Gtwo\).

\begin{restatable}{theorem}{mainthm}\label{thm}
\(\QMA=\QMAone^{\Gtwo}\).
\end{restatable}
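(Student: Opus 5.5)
The plan is to prove the two inclusions separately. The inclusion \(\QMAone^{\Gtwo}\subseteq\QMA\) is immediate, since a perfectly complete verifier built from \(\Gtwo\) gates is in particular a \(\QMA\) verifier. For the reverse inclusion, start from an arbitrary \(\QMA\) verifier and first compile it into \(\Gtwo\). By the result of Amy et al.~\cite{AGKMMR24}, circuits over \(\Gtwo\) approximate (and, on enlarged registers, exactly represent) real orthogonal matrices with dyadic entries. Combined with the standard trick of simulating complex unitaries by real ones on one extra qubit, and with strong error reduction (Marriott--Watrous), this gives a \(\QMA\) verifier \(V\) over \(\Gtwo\) with completeness \(1-2^{-p}\) and soundness \(2^{-p}\). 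The reason to compile first is algebraic. Every matrix entry of \(V\), and hence of the acceptance operator \(M=\Pi_{\mathrm{in}}V^\dagger\Pi_{\acc}V\Pi_{\mathrm{in}}\), lies in \(\Z[1/2]\) with polynomially bounded denominator exponent. The optimal acceptance probability \(\lmax(M)\) is then an algebraic number governed by an integer characteristic polynomial of controlled height.

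The main step is to replace \(V\) by a verifier whose optimal acceptance probability can be pushed to exactly \(1\). I would use the Jordan (two-projector) decomposition of \(\Pi_{\mathrm{in}}\) and \(V^\dagger\Pi_{\acc}V\). On the top Jordan block the honest witness spans a two-dimensional invariant subspace with angle \(\theta\), where \(\sin^2\theta=\lmax\). Inside that block, the product of the reflections \(2\Pi_{\mathrm{in}}-I\) and \(V^\dagger(2\Pi_{\acc}-I)V\) is a rotation by \(2\theta\). Crucially, neither reflection depends on the witness, so Arthur can perform amplitude amplification on Merlin's eigenvector without knowing it. If \(\theta\) equals \(\pi/(2(2m+1))\) for an integer \(m\) Arthur knows, then \(m\) rounds followed by the final measurement accept with certainty.

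The key obstacle is that \(\theta\) is an uncontrolled algebraic number. My first attempt to overcome it is to let Merlin supply, as part of the witness, data that \emph{exactly} tunes the angle. Concretely, append a gadget register and modify the verifier to accept iff \(V\) accepts \emph{and} a gadget check \(G\) passes. \(G\) should be built from \(\Gtwo\) gates so that its acceptance operator has a dense, explicitly known family of eigenvalues \(\mu\). The target is that the combined top eigenvalue \(\lmax\mu\) equals exactly \(1/4\), after which a single round of exact amplification gives acceptance probability \(1\). The most delicate part is that Merlin must be able to hit \(\lmax\mu=1/4\) exactly and not just approximately. For this I would aim for a gadget whose eigenvalues vary over a continuum. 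One candidate is a gadget that accepts a Merlin-supplied two-dimensional state whose amplitudes feed back into \(V\)'s acceptance via a controlled reflection, so that the relevant Jordan angle becomes a continuous function of Merlin's state and passes through the target value by the intermediate value theorem. The gadget must also stay rational-entried, and this is where I expect the real difficulty, since \(\Gtwo\) circuits alone have finite spectra. Aaronson's oracle separation confirms that this step must use the structure of \(V\) and cannot be black-box.

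For soundness, I would show that any witness, including adversarial gadget states, yields combined acceptance probability at most \(\lmax\cdot\max\mu\le 2^{-p}\). The amplification rounds then raise this by at most a constant factor, which leaves an inverse-polynomial gap.
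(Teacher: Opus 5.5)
Your easy inclusion and your compilation into $\Gtwo$ are fine and match the paper's first step. The central step, however, has a genuine gap, and the mechanism you propose for closing it cannot work.

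Once Arthur's circuit is fixed (gadget included), the final verifier has a fixed acceptance operator $E'$ on the whole witness space. Merlin's state only chooses the weights on the Jordan blocks of $E'$; it cannot change their angles. Suppose the witness has weight $|c_i|^2$ on a block with angle $\theta_i$. Both of your witness-independent reflections preserve each block, so after $m$ rounds the acceptance probability is $\sum_i|c_i|^2\sin^2((2m+1)\theta_i)$. A superposition of blocks with different angles therefore does not act like a block with an intermediate angle. Acceptance with certainty requires a witness supported on blocks with $\sin^2((2m+1)\theta_i)=1$ exactly, i.e., $E'$ must have an exact eigenvalue such as $1/4$.

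Feeding Merlin's amplitudes into controlled reflections is still a linear map, so it only enlarges $E'$. The intermediate value theorem can make the \emph{acceptance probability} equal $1/4$, but that is useless unless the witness is an eigenvector. What you actually need is a gadget, specified by polynomially many classical bits, whose spectrum contains $1/(4\lmax(E))$ exactly. Since $\lmax(E)$ can have exponential algebraic degree, this restates the original obstacle rather than overcoming it. Separately, once the gadget feeds into $V$, your soundness bound $\lmax\cdot\max\mu$ no longer follows from a tensor-product structure.

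The missing idea is to avoid certifying any eigenvalue at all. Set $A=2E$; after compilation $2^hE$ is an integer matrix. Then the diagonal entries $\Gamma_z=\bra zA^\ell\ket z$ are dyadic rationals with $O(h\ell)$ bits. On YES instances $\Tr(A^\ell)\ge(4/3)^\ell$, which forces some $\Gamma_z>2$ when $\ell=\Theta(m)$.

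Merlin sends $z$, the exact value $\Gamma_z=a/b$, and the history $\sum_{t=0}^{\ell}\ket t\otimes A^t\ket z$. This history is an exact fixed point of the cyclic operator $C_{z,\Gamma}$, which applies $A$ on each forward step and $\Gamma^{-1}\proj z$ on the closing step. Arthur tests membership in $\ker(I-C_{z,\Gamma})$ using an exact block encoding over $\Gtwo$. The coefficient $b/a$ is realized by counting index values in a uniform superposition.

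Soundness needs no exact spectral information. On NO instances, every claimed $\Gamma\ge2$ gives $\norm{C_{z,\Gamma}}\le\max\{\norm A,\Gamma^{-1}\}\le 2/3$, so a dishonest value cannot help Merlin.
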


The fixed-gate-set strengthening is itself significant.
For bounded-error classes such as \(\BQP\) and \(\QMA\), approximate circuit synthesis makes the choice of universal gate set essentially irrelevant.
Perfect completeness, however, leaves no room for approximation error, since approximating a gate can turn an acceptance probability of one into something strictly smaller than one.
It was therefore not previously known whether \(\QMAone\) admits a universal gate set at all~\cite{Rud25}.
\Cref{thm} settles this question as well.
For every fixed finite gate set \(\calG\), $\QMAone^{\calG}\subseteq\QMA=\QMAone^{\Gtwo}$.
Thus, \(\Gtwo\) is universal for \(\QMAone\).

\subsection{Further consequences}\label{sec:consequences}

\cref{thm} has several immediate consequences.
First, any gate set that implements \(\Gtwo\) exactly also gives perfectly complete verifiers for all of \(\QMA\).

\begin{corollary}\label{cor:gatesets}
Let \(\calG\) be a gate set that implements every gate of \(\Gtwo\) exactly, possibly using ancillas that start and end in \(\ket0\).
Then
\[
  \QMAone^{\calG}=\QMA.
\]
\end{corollary}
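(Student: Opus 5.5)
The equality follows from two inclusions: $\QMA \subseteq \QMAone^{\calG}$ comes from simulating $\Gtwo$-circuits exactly, and $\QMAone^{\calG} \subseteq \QMA$ is the general upper bound already noted after \cref{thm}.

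\emph{Lower bound.} Let $L\in\QMA$. By \cref{thm}, $L$ has a perfectly complete verifier $V$ that uses only gates from $\Gtwo$, with soundness $s\le 1-1/\poly$. Construct $V'$ by replacing each gate $g\in\Gtwo$ in $V$ with its exact $\calG$-implementation $C_g$. Each $C_g$ uses some constant number $a_g$ of ancillas that start in $\ket0$ and, by assumption, end in $\ket0$. Exact implementation means that $C_g(\ket\psi\otimes\ket{0^{a_g}})=(g\ket\psi)\otimes\ket{0^{a_g}}$ for every input $\ket\psi$, including entangled inputs on the larger system. The ancillas therefore return to $\ket0$ deterministically, so they can be reused. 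We allocate $\max_g a_g=O(1)$ fresh ancilla qubits once, initialized to $\ket0$ as the verifier always may.

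By induction over the gate sequence, the joint state of $V'$ after simulating the first $t$ gates equals the state of $V$ after $t$ gates, tensored with $\ket{0\cdots0}$ on the ancillas. This holds exactly, as an identity of vectors, for every proof state, including mixed proofs by linearity. Hence $V'$ and $V$ accept every proof with the same probability. The completeness of $V'$ is then exactly $1$ and its soundness is still $s$. The circuit size grows by only a constant factor, since $\Gtwo$ is finite and each $C_g$ has constant size. Uniformity is preserved because the substitution is a fixed local rewriting rule. Therefore $L\in\QMAone^{\calG}$.

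\emph{Upper bound.} Every fixed finite gate set can be approximated to inverse-exponential precision by a standard universal $\QMA$ gate set using Solovay--Kitaev with polynomial overhead. A $\QMAone^{\calG}$ verifier is thus simulated by a $\QMA$ verifier with completeness $1-2^{-\poly}$ and essentially unchanged soundness. Error reduction handles the rest. This matches the remark following \cref{thm} that $\QMAone^{\calG}\subseteq\QMA$.

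The only step that needs care is that each ancilla is returned exactly to $\ket0$ and is unentangled from the rest of the system, since leftover entanglement would disturb later gates. The hypothesis guarantees this, and the remainder of the argument is bookkeeping.
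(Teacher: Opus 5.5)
Your argument is correct and matches the reasoning the paper relies on. Exact gate-by-gate substitution, with the ancillas returned to $\ket0$, gives $\QMA=\QMAone^{\Gtwo}\subseteq\QMAone^{\calG}$, and Solovay--Kitaev plus amplification gives $\QMAone^{\calG}\subseteq\QMA$, as remarked in \cref{sec:prelim}. The only loose spot is that you quote soundness $s\le1-1/\poly$, whereas \cref{thm} already gives soundness $1/2$ through the definition $\QMAone^{\Gtwo}=\QMA^{\Gtwo}(1,1/2)$, which is what membership in $\QMAone^{\calG}$ requires; with only $s\le1-1/\poly$ you would also need to invoke parallel repetition.
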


The gate set consisting of Hadamard, Toffoli, and \(X\) gates is therefore also universal for \(\QMAone\).
Indeed, two Hadamard gates implement \(H\otimes H\), while a \(\CNOT\) can be implemented using a Toffoli gate with one control fixed to \(\ket1\).
In fact, pairing the Hadamards is not essential to our proof.
Even with ordinary Hadamard gates, the acceptance operator remains dyadic because each contribution from the verifier is paired with one from its inverse in \(V^\dagger\Pi_{\acc}V\).
We state the result using \(\Gtwo\) because its gates have rational matrix entries, making the arithmetic structure used in the proof particularly transparent and connecting directly to the gate set studied by Amy et al.~\cite{AGKMMR24}.

\Cref{cor:gatesets} also applies to the Clifford-cyclotomic gate sets \(\mathcal G_{2^k}\) studied by Amy et al.~\cite[Theorem~1]{AGKMMR24}, which include the Clifford+\(T\) gate set.
Gosset and Nagaj~\cite{GN13} proved that quantum \(3\)-SAT is complete for \(\QMAone\) with Clifford+\(T\) verifiers.
Since \cref{cor:gatesets} gives $\QMAone^{\mathrm{Clifford}+T}=\QMA$, their result immediately yields the following.

\begin{corollary}\label{cor:3sat}
Quantum \(3\)-SAT is \(\QMA\)-complete.
\end{corollary}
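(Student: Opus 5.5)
The corollary follows by composing two reductions, so the plan is mostly bookkeeping.

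\textbf{Containment in $\QMA$.} Quantum $3$-SAT lies in $\QMAone^{\mathrm{Clifford}+T}$. This is the membership half of Gosset and Nagaj~\cite{GN13}, where Arthur measures a uniformly random local projector of the instance, and it is implementable exactly with Clifford+$T$ gates. Since $\QMAone^{\calG}\subseteq\QMA$ for any fixed finite gate set, quantum $3$-SAT is in $\QMA$. I will also check that the promise-gap parameters match: the NO case has energy at least an inverse polynomial. Then parallel repetition plus standard $\QMA$ amplification brings soundness down to any desired constant.

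\textbf{$\QMA$-hardness.} Take any promise problem $L\in\QMA$.
\begin{enumerate}
\item Apply \cref{cor:gatesets} with $\calG=$ Clifford+$T$. This gate set contains $X$, $\CNOT$, and $H$. Toffoli has the standard exact Clifford+$T$ decomposition, and $H\otimes H$ is two Hadamards. So \cref{cor:gatesets} applies, giving $L\in\QMAone^{\mathrm{Clifford}+T}$.
\item Before the reduction, amplify soundness to whatever constant (or $1/\poly$) the Gosset--Nagaj construction requires. Amplification by parallel repetition with an AND of the outcomes preserves perfect completeness and uses only the same gates.
\item Invoke the Gosset--Nagaj hardness theorem: every problem in $\QMAone$ with Clifford+$T$ verifiers reduces in polynomial time to quantum $3$-SAT. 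This is a Feynman--Kitaev-style history-state construction adapted to $3$-local projectors. A YES instance (acceptance probability exactly $1$) maps to a frustration-free instance. A NO instance maps to one with minimum energy at least $1/\poly$.
\end{enumerate}
Composing these maps gives a polynomial-time reduction from $L$ to quantum $3$-SAT.

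\textbf{Main obstacle.} The only delicate point is making sure the hypotheses of~\cite{GN13} match what \cref{cor:gatesets} supplies. Specifically, I need to check:
\begin{itemize}
\item Their gate-set convention: whether ancillas may be used, and whether a specific gate set containing $T$, $H$, and $\CNOT$ is required.
\item Their soundness requirement.
\end{itemize}
Any mismatch in the gate set is fixed by exact re-synthesis: Toffoli via the $7$-$T$ decomposition, and $X=HZH$ with $Z=T^4$, using ancillas returned to $\ket0$ as \cref{cor:gatesets} permits. Any mismatch in soundness is fixed by the perfect-completeness-preserving amplification of step 2.
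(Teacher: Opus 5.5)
Your proposal is correct and follows the paper's own argument. The paper likewise uses \cref{cor:gatesets} to get $\QMAone^{\mathrm{Clifford}+T}=\QMA$, since Clifford+$T$ implements $\Gtwo$ exactly, and then transfers Gosset and Nagaj's $\QMAone^{\mathrm{Clifford}+T}$-completeness of quantum $3$-SAT to $\QMA$-completeness; your explicit Toffoli/$X$ synthesis and soundness-amplification bookkeeping just spell out details the paper leaves implicit.
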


Ma and Natarajan~\cite{ma_et_al:LIPIcs.ITCS.2026.101} recently introduced $XZ$-quantum $k$-SAT, a variant of quantum $k$-SAT in which each local projector is diagonal in either the standard or Hadamard basis. 
They show that $XZ$-quantum $6$-SAT is complete for $\QMAone^{\Gtwo}$. Combining with \cref{thm} implies $\QMA$-completeness. 

\begin{corollary}\label{cor:xzsat}
$XZ$-quantum \(6\)-SAT is \(\QMA\)-complete.
\end{corollary}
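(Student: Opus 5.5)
The plan is to obtain \cref{cor:xzsat} as a direct combination of \cref{thm} with the result of Ma and Natarajan~\cite{ma_et_al:LIPIcs.ITCS.2026.101}. There are two things to check: membership of $XZ$-quantum $6$-SAT in $\QMA$, and $\QMA$-hardness.

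\emph{Membership.} Quantum $k$-SAT for constant $k$ lies in $\QMA$ by the standard local-Hamiltonian verifier. Arthur picks a uniformly random constraint projector $\Pi_i$ and measures it on the proof. In a YES instance, a satisfying state is accepted with probability $1$. In a NO instance, every state has energy at least an inverse polynomial, so it is rejected with inverse-polynomial probability, and parallel repetition then amplifies the gap. $XZ$-quantum $6$-SAT is a special case of quantum $6$-SAT, so it inherits membership. Alternatively, Ma and Natarajan already place it in $\QMAone^{\Gtwo}$, and $\QMAone^{\Gtwo}\subseteq\QMA$ is immediate. One small point to verify is that their promise gap and encoding of projectors are polynomially bounded, so that the standard verifier is efficient. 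Since their completeness result includes containment, it suffices to cite them directly.

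\emph{Hardness.} Ma and Natarajan show that $XZ$-quantum $6$-SAT is hard for $\QMAone^{\Gtwo}$ under polynomial-time Karp reductions. By \cref{thm}, $\QMAone^{\Gtwo}=\QMA$, so every $\QMA$ promise problem is Karp-reducible to $XZ$-quantum $6$-SAT. One must check that the two results use compatible definitions of $\QMAone^{\Gtwo}$: the same gate set $\{X,\CNOT,\Toffoli,H\otimes H\}$, soundness as some constant or inverse-polynomial bounded below $1$, and the same input conventions. \cref{thm} yields verifiers with perfect completeness and amplifiable soundness, so any reasonable soundness parameter in their definition can be met.

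The only real obstacle is this bookkeeping about definitions: whether their class allows ancillas, whether it requires specific soundness thresholds, and whether it uses $H\otimes H$ or single Hadamards. If there is any mismatch, I would appeal to \cref{cor:gatesets}, which says that any gate set that implements $\Gtwo$ exactly gives $\QMAone^{\calG}=\QMA$, and to the standard soundness amplification for perfectly complete verifiers. Sequential repetition preserves perfect completeness and uses only the same gates. Together these close the gap and give the corollary.
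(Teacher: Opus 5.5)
Your proposal is correct and takes the same route as the paper, which simply combines Ma and Natarajan's $\QMAone^{\Gtwo}$-completeness of $XZ$-quantum $6$-SAT with \cref{thm}; membership follows from $\QMAone^{\Gtwo}\subseteq\QMA$. Your extra checks on definitional compatibility (gate set, soundness threshold, repetition) are harmless but not needed, since the paper takes their class to be exactly $\QMAone^{\Gtwo}$.
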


Another consequence of \cref{thm} concerns clique homology.
Given a graph, its clique complex is obtained by filling each clique with a simplex, and clique homology asks whether this complex has nontrivial homology in a specified dimension.
Crichigno and Kohler~\cite{CK24} first showed that the problem is \(\QMAone\)-hard, while King and Kohler~\cite{KK24} placed the gapped variant for weighted graphs in \(\QMA\).
Subsequently, Rudolph~\cite[Theorem~3.11]{Rud25} proved that the weighted gapped problem is complete for \(\QMAone^{\Gtwo}\), and Hayakawa~\cite{Hay26} obtained the same result for unweighted graphs. Together with \cref{thm}, this gives the following.

\begin{corollary}\label{cor:homology}
Gapped clique homology is \(\QMA\)-complete for both weighted and unweighted graphs.
\end{corollary}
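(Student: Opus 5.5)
The statement to prove is \cref{cor:homology}, which is a direct chaining of \cref{thm} with the completeness results for gapped clique homology cited just before it. For weighted graphs, \cite[Theorem~3.11]{Rud25} shows that the gapped weighted clique homology problem is complete for \(\QMAone^{\Gtwo}\). For unweighted graphs, \cite{Hay26} gives the same completeness. By \cref{thm}, \(\QMAone^{\Gtwo}=\QMA\). Substituting this equality into both completeness statements gives \(\QMA\)-completeness in both settings.

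In more detail, the first step is membership: each problem lies in \(\QMAone^{\Gtwo}\), and hence in \(\QMA\) by \cref{thm}. For the weighted case, membership in \(\QMA\) is also given independently by King and Kohler~\cite{KK24}. Since \(\QMAone^{\Gtwo}\subseteq\QMA\) holds trivially, this direction does not actually need the new theorem.

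The second step is hardness. Take any \(L\in\QMA\). By \cref{thm}, \(L\in\QMAone^{\Gtwo}\). The cited hardness reductions, which are polynomial-time many-one reductions from \(\QMAone^{\Gtwo}\) verifiers, then map \(L\) to gapped clique homology. This holds for weighted graphs by \cite{Rud25} and for unweighted graphs by \cite{Hay26}.

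The only point needing care is definitional. The versions of \enquote{gapped clique homology}, meaning the promise gap and the weighting conventions, must be exactly those used in \cite{Rud25,Hay26}. The hardness reductions must also accept verifiers over precisely the gate set \(\Gtwo\). If either cited result were stated for a different exact gate set, I would instead invoke \cref{cor:gatesets}, checking that the gate set implements \(\Gtwo\) exactly with clean ancillas. Beyond this bookkeeping there is no real obstacle, since all of the mathematical weight sits in \cref{thm}.
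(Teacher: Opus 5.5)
Your proposal is correct and matches the paper's argument: the corollary follows by combining the \(\QMAone^{\Gtwo}\)-completeness results of \cite{Rud25} (weighted) and \cite{Hay26} (unweighted) with the equality \(\QMA=\QMAone^{\Gtwo}\) from \cref{thm}. Your further remarks are sound bookkeeping beyond what the paper states, namely that membership needs only the trivial inclusion \(\QMAone^{\Gtwo}\subseteq\QMA\) and that the problem definitions and gate set must match those of the cited results.
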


Finally, although our proof necessarily fails to relativize to arbitrary quantum oracles, it does preserve access to classical ones.
For a classical oracle \(O\colon\{0,1\}^*\to\{0,1\}\), let \(Q_O\) denote the usual query gate
\[
  Q_O\ket{y,b}=\ket{y,b\oplus O(y)}.
\]
This gate is a permutation matrix and is its own inverse. 
As we explain below, these properties preserve the arithmetic structure that our proof relies on.

\begin{restatable}{corollary}{relativizationcor}\label{cor:relativize}
For every classical oracle \(O\) under standard XOR access, $\QMA^O=\QMAone^{\Gtwo,O}$.
The same holds for in-place permutation oracles.
\end{restatable}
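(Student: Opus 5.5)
The plan is to take the proof of \cref{thm} and check that each step goes through verbatim when the verifier may also use the gate \(Q_O\). The inclusion \(\QMAone^{\Gtwo,O}\subseteq\QMA^O\) is immediate: a perfectly complete verifier is in particular a bounded-error one. For the converse, we first put an arbitrary \(\QMA^O\) verifier in a normal form. By standard approximate synthesis, which relativizes because \(Q_O\) is left untouched, we may assume the verifier consists of gates from \(\Gtwo\) together with query gates \(Q_O\), with only a constant loss in the completeness--soundness gap. Error reduction by Marriott--Watrous is black-box in the verifier's circuit, so it is also available; it uses \(V\) and \(V^\dagger\), and \(Q_O^\dagger=Q_O\).

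The core of the argument is the arithmetic structure that the proof of \cref{thm} uses. For a fixed input length, every gate of \(\Gtwo\) has dyadic rational entries. Hence the entries of \(V\), and therefore those of the acceptance operator \(V^\dagger\Pi_{\acc}V\), lie in \(\Z[1/2]\) with a controlled denominator \(2^{k}\), where \(k\) is polynomial and is fixed by the number of \(H\otimes H\) gates. Every step in the proof of \cref{thm} that exploits this structure must then be checked. Two facts make this work. First, \(Q_O\) is a \(0/1\) permutation matrix, so inserting it anywhere keeps all entries in \(\Z[1/2]\) and does not increase the denominator. Second, \(Q_O\) is self-inverse, so any step that uncomputes or runs \(V^\dagger\) can be carried out with the same gate set plus \(Q_O\). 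In-place permutation oracles \(P_O\) are handled the same way, provided the verifier is given access to \(P_O^{-1}\) (or the proof uses inverses only in paired form \(V\cdots V^\dagger\)). Permutation matrices are closed under inverse and have integer entries.

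The main obstacle is that the proof of \cref{thm} may not only use \(V\) as a black box with a known denominator. It may also compute some exact quantity from the circuit description, for example the denominator, a normalization, or a rotation angle chosen so that an amplitude equals exactly \(1\) on YES instances. Any such quantity must not depend on the values of \(O\). My first move is to isolate every place where the proof inspects the circuit rather than applying it. I would then argue that these places depend only on the gate counts and the circuit's shape, namely the number of \(H\otimes H\) gates and the number of qubits, which are oracle-independent, and never on the entries of \(V\). If some step does need entry-level information, I would fall back on the fact that the relevant entries remain in the lattice \(2^{-k}\Z\) for the same \(k\) regardless of \(O\). Perfect completeness would then follow from the same exact-arithmetic argument, applied uniformly over all oracles.
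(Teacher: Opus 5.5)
Your handling of XOR oracles matches the paper's proof. $Q_O$ is a real $0/1$ permutation matrix, so it adds no denominators. The exponent $h=2g$ depends only on the number of $H\otimes H$ gates, so $2^{h\ell}\Gamma_z\in\Z$ for every $O$, and Merlin can still send $\Gamma_z$ exactly. Since $Q_O=Q_O^\dagger$, $V^\dagger$ is just the reversed gate list, and the block encoding of \cref{lem:block} applies $V$ and $V^\dagger$ without controls, so nothing else changes.

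One imprecision: you cannot reach $\Gtwo\cup\{Q_O\}$ by approximate synthesis alone, because $\Gtwo$ generates only real matrices. The step you need is the real simulation of \cref{lem:G2}, with the extra qubit moved into the proof register. The thing to check there is $\calR(Q_O)=I_c\otimes Q_O$, which holds because $Q_O$ is real. Marriott--Watrous is not needed.

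The genuine gap is the in-place permutation case. In that model the verifier may query $P_O$ only forward. Assuming access to $P_O^{-1}$ proves the statement for a stronger oracle model, not the one claimed.

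Your fallback, that inverses occur only in paired form $V\cdots V^\dagger$, does not help. In $W_T$ the flag is flipped between $V$ and $V^\dagger$, conditioned on the output qubit. On the branch that produces the block $T$ the two therefore do not cancel, and $V^\dagger$ genuinely executes $P_O^{-1}$. In general $P_O^{-1}$ cannot be simulated efficiently with forward queries.

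The missing idea, used in the paper, has two stages:
\begin{enumerate}
\item Run the whole construction with inverse queries temporarily allowed.
\item Eliminate the inverse queries. The prover sends the Feynman--Kitaev history of the resulting perfectly complete verifier. Each inverse step $\ket\alpha\ket t\mapsto\ket\beta\ket{t+1}$, with $\ket\beta=P_O^\dagger\ket\alpha$, is checked by applying $P_O$ controlled on the clock being $t+1$, followed by an identity-gate propagation check. This gives exactly the usual penalty, since $\norm{\ket\alpha-P_O\ket\beta}=\norm{P_O^\dagger\ket\alpha-\ket\beta}$.
\end{enumerate}
This route needs controlled forward queries, which must be obtained separately for in-place oracles; the paper takes them from \cite{MPR26}. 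It also needs the argument of Lemma~3.11 in the full version of \cite{MPR26}, to conclude that inverse queries can be removed while keeping perfect completeness and a constant soundness gap.
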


Recent works of Bostanci, Haferkamp, Nirkhe, and Zhandry~\cite{BHNZ26} and Bostanci, Huang, and Vaikuntanathan~\cite{BHV26} constructed classical oracles separating \(\QMA\) from \(\QCMA\).
By \cref{cor:relativize}, the same oracles separate \(\QMAone\) from \(\QCMA\).

\begin{corollary}\label{cor:oracle}
There is a classical oracle \(O\) such that $\QCMA^O\subsetneq\QMAone^{\Gtwo,O}$.
\end{corollary}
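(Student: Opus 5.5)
The plan is to combine \cref{cor:relativize} with the known classical-oracle separations, and to check only that the oracle models agree. Let \(O\) be one of the classical oracles constructed by Bostanci, Haferkamp, Nirkhe, and Zhandry~\cite{BHNZ26} or by Bostanci, Huang, and Vaikuntanathan~\cite{BHV26}. Relative to \(O\), some language \(L\) lies in \(\QMA^O\) but not in \(\QCMA^O\). I would first confirm which access model those papers use. If it is the standard XOR query gate \(Q_O\ket{y,b}=\ket{y,b\oplus O(y)}\), the first clause of \cref{cor:relativize} applies directly. If it is an in-place permutation oracle, the second clause applies. If the construction is stated for functions with multi-bit outputs, I would encode it as a Boolean oracle \(O'(y,i)=O(y)_i\). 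XOR access to \(O'\) simulates XOR access to \(O\) exactly, one query per output bit, and the reverse simulation also holds. So both \(\QMA\) and \(\QCMA\) are unchanged under this re-encoding.

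With the model matched, \cref{cor:relativize} gives \(\QMA^O=\QMAone^{\Gtwo,O}\). Hence \(L\in\QMAone^{\Gtwo,O}\setminus\QCMA^O\), so \(\QCMA^O\neq\QMAone^{\Gtwo,O}\). The remaining point is the inclusion \(\QCMA^O\subseteq\QMAone^{\Gtwo,O}\). This follows from \(\QCMA^O\subseteq\QMA^O\), since a classical proof is a special case of a quantum proof and the verifier can measure it in the computational basis; applying \cref{cor:relativize} once more gives the inclusion. Together these give the proper inclusion.

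The main obstacle is not mathematical but a matter of bookkeeping. The cited separations must be of \(\QMA\) from \(\QCMA\) in a model that \cref{cor:relativize} covers. This means the oracle must be classical, accessed through a self-inverse permutation gate, and not a unitary or distributional oracle. Some separations are stated for distributions over oracles or for randomized oracle families. In that case I would use the standard argument that a separation holding with probability \(1\) over the random oracle yields a fixed oracle \(O\), and then apply the argument above to that fixed \(O\).
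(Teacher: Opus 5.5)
Your proposal is correct and follows the paper's argument, which simply applies \cref{cor:relativize} to the classical oracles of \cite{BHNZ26,BHV26} separating \(\QMA\) from \(\QCMA\); the paper leaves implicit the inclusion \(\QCMA^O\subseteq\QMA^O=\QMAone^{\Gtwo,O}\) that you spell out. One minor slip: in your last paragraph you require the query gate to be self-inverse, but that is not needed, since \cref{cor:relativize} also covers in-place permutation oracles, as you yourself noted earlier.
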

This strengthens earlier classical-oracle separations between \(\QMAone\) and \(\QCMA\) for restricted verifier and oracle models~\cite{MPR26}.

\subsection{Proof overview}

The containment $\QMA_1^{\Gtwo} \subseteq \QMA$ is immediate, so the technical part of this work is proving the reverse inclusion.
Let $E$ be the acceptance operator of a $\QMA$ verifier on $m$ proof qubits, so that a proof $\rho$ is accepted with probability $\Tr(E \rho)$.
In the YES case, $\lambda_{\rm max}(E) \ge \tfrac{2}{3}$, whereas, in the NO case, $\norm{E} \le \tfrac{1}{3}$.
Our goal is to transform this verifier into one for which some proof is accepted with probability exactly one in the YES case, while every proof is accepted with probability bounded away from one in the NO case.

A natural approach would be for Merlin to provide a classical description of \(\lambda=\lambda_{\max}(E)\), together with a corresponding eigenvector \(\ket{\psi}\).
If \(\lambda\) had a succinct exact description, Arthur could check that \(\lambda\ge2/3\) and then certify
\[
  (\lambda I-E)\ket{\psi}=0.
\]

The difficulty is that the maximum eigenvalue need not have a simple exact representation that the verifier can use.
Even when every entry of \(E\) is a dyadic rational, \(\lambda_{\max}(E)\) can have algebraic degree exponential in \(m\).\footnote{A \emph{dyadic rational} is a rational number of the form \(a/2^h\), for integers \(a\) and \(h\ge0\). The \emph{algebraic degree} of a number \(\alpha\) is the degree of its minimal polynomial over \(\mathbb{Q}\), i.e., the lowest-degree monic polynomial in \(\mathbb{Q}[x]\) having \(\alpha\) as a root.}
Thus, the standard exact representation of an algebraic number through its minimal polynomial may already involve an object of exponential degree, and there is no general polynomial-size exact representation of \(\lambda_{\max}(E)\) that we can directly use in the verification procedure.
This is in contrast with \(\QCMA\), where the acceptance probability associated with a fixed classical proof can be arranged to have a succinct exact description~\cite{JKNN12}.
For \(\QMA\), directly certifying the optimal acceptance probability therefore appears to be the wrong approach.

The key idea is to avoid describing an eigenvalue altogether.
We first compile the $\QMA$ verifier into the gate set \(\Gtwo\), while preserving its completeness and soundness. That is, we show that $\QMA = \QMA^{\Gtwo}(2/3, 1/3)$.
For the resulting verifier, we identify a family of rational quantities \(\{\Gamma_z\}_{z\in\{0,1\}^m}\) that retain enough information about the spectrum to witness a YES instance, while admitting succinct exact descriptions.
For every YES instance, some \(\Gamma_z\) is large, and we use its exact value to construct a quantum history whose consistency can be verified with certainty.

Set $A=2E$, where $E$ is the acceptance operator of the $\QMA$ verifier.
This rescaling puts the YES and NO cases on opposite sides of \(1\).
On a YES instance, $\lambda_{\max}(A)\ge\tfrac43$,
while on a NO instance, $\norm A\le\tfrac23$.
Thus powers of \(A\) grow along some direction in the YES case, while \(A\) remains a strict contraction in the NO case.
Now take \(\ell=\Theta(m)\).
On a YES instance, we have 
\[
\Tr(A^\ell)\ge\lambda_{\max}(A)^\ell\ge(4/3)^\ell.
\]
Since $\Tr(A^\ell)  =   \sum_{z\in\{0,1\}^m}\bra zA^\ell\ket z$,
some computational-basis state \(z\) satisfies
\[
  \Gamma_z\coloneqq \bra zA^\ell\ket z
  \ge
  2^{-m}(4/3)^\ell.
\]
By choosing \(\ell=O(m)\) sufficiently large, we have $\Gamma_z>2$.

Crucially, \(\Gamma_z\) is much easier to describe exactly than \(\lambda_{\max}(E)\).
Because we compiled the $\QMA$ verifier into \(\Gtwo\), we can argue that there is a polynomially bounded integer \(h\) such that \(2^hE\) is an integer matrix.
It follows that $2^{h\ell}\Gamma_z\in\Z$, and \(\Gamma_z\) has a rational description using only \(O(h\ell)\) bits.
Thus \(\Gamma_z\), unlike the maximum eigenvalue, is an exact quantity that Merlin can describe succinctly.

The verification proceeds as follows.
Merlin supplies \(z\), an exact description of \(\Gamma_z\), and the normalized history state proportional to
\[
  \ket{\Omega_z}
  =
  \sum_{t=0}^{\ell}\ket t\otimes A^t\ket z,
\]
where successive terms are related by multiplication by \(A\).

We emphasize that the powers of \(A\) should not be interpreted as physical time evolution.
The operator \(A=2E\) is generally nonunitary and does not preserve norms.
Indeed, writing \(\ket z\) in an eigenbasis of \(A\), each eigenspace component of \(A^t\ket z\) changes geometrically with \(t\).
Consequently, after normalization, the clock distribution of \(\ket{\Omega_z}\) can be highly nonuniform and may place substantially more weight on later clock values.
This differs from the usual Feynman--Kitaev circuit history, which is a uniform superposition over a sequence of states related by unitary gates.
The nonuniformity causes no difficulty here because Arthur does not verify the history through separate local propagation checks.
Instead, he checks a single global consistency condition for the entire history.

To construct this condition, observe that the sequence \(A^t\ket z\) can be closed into a cycle by mapping the final term back to the first.
This is precisely where the claimed values \(z\) and \(\Gamma_z\) enter the verification.
Consider the scaled rank-one projection $\Gamma_z^{-1}\proj z$. 
It is easy to check that
\[
  \Gamma_z^{-1}\proj zA^\ell\ket z=\ket z.
\]
Writing \(\beta=\Gamma_z^{-1}\), we can therefore define the cyclic operator $C_{z, \beta}$ by 
\[
  C_{z,\beta}
  =
  \sum_{t=0}^{\ell-1}\ketbra{t+1}{t}\otimes A
  +
  \ketbra0\ell\otimes\beta\proj z.
\]
For the honest value \(\beta=\Gamma_z^{-1}\), the history state is an exact fixed point of this operator: $C_{z,\beta}\ket{\Omega_z}
  =
  \ket{\Omega_z}$.
Equivalently,
\[
  (I-C_{z,\beta})\ket{\Omega_z}=0.
\]

As long as Arthur can efficiently and exactly check whether Merlin's proof is in the kernel of $I- C_{z,\beta}$, we have that Arthur rejects honest history states with probability exactly zero, as desired. 
We explain in our proof how Arthur can perform such a check, and our check is similar to the kernel test in \cite{Rud25}. 
In particular, Arthur's rejection probability is proportional to 
\[
  \norm{(I-C_{z,\beta})\ket{\psi}}_2^2.
\]

The soundness of our verification makes essential use of the fact that $\norm{A} < 1$ in the NO case. 
Arthur first rejects any claimed value \(\Gamma<2\), so for every remaining claim we have \(\beta=\Gamma^{-1}\le\tfrac12\).
On a NO instance, \(\norm A\le\tfrac23\).
Since \(C_{z,\beta}\) applies either \(A\) or \(\beta\proj z\) on each clock sector, it follows that
\[
  \norm{C_{z,\beta}}
  =
  \max\{\norm A,\beta\}
  \le \frac23.
\]
Thus \(C_{z,\beta}\) is a strict contraction for every \(z\) and every claimed \(\Gamma\ge2\), regardless of whether \(\Gamma\) is equal to the true value \(\bra zA^\ell\ket z\).
Consequently, for every normalized state \(\ket\psi\),
\[
  \norm{(I-C_{z,\beta})\ket\psi}
  \ge 1-\norm{C_{z,\beta}}
  \ge \frac13.
\]
Hence no purported history state on a NO instance can satisfy the cyclic consistency condition, and the kernel test rejects with constant probability.

Putting the two cases together, a YES instance admits a choice of \(z\), \(\Gamma\), and history state for which the cyclic constraint is satisfied exactly, while on a NO instance every claimed \(z\), \(\Gamma\), and history state violates the constraint by a constant amount.
It is useful to emphasize that this is a global consistency check.
Once \(z\) and \(\Gamma\) are fixed, they determine the operator \(C_{z,\beta}\).
On a YES instance, the honest choice gives \(C_{z,\beta}\) an eigenvalue equal to \(1\), with eigenvector \(\ket{\Omega_z}\).
On a NO instance, by contrast, \(\norm{C_{z,\beta}}\le 2/3\) for every allowed \(z\) and \(\Gamma\).
Thus soundness holds against arbitrary purported history states, regardless of how their norm is distributed across the clock register.
The remaining technical work is to implement this global kernel test exactly over \(\Gtwo\), which we do in \cref{sec:verifier}.

Our construction is closely related in spirit to the recent work of Jeffery and Witteveen~\cite{JW26}.
Both constructions use history states whose amplitudes evolve geometrically.
Jeffery and Witteveen attach an infinite counter to an eigenvector of the acceptance operator and choose geometrically weighted amplitudes so that contributions from neighboring counter values cancel exactly.
The infinite counter is what makes this cancellation exact; truncating it leaves a boundary error, leading to their doubly exponentially small completeness error.

Our history state has a similar structure, with successive terms given by \(A^t\ket z\), so that each eigenspace component changes geometrically with \(t\).
The key new ingredient is to certify not the maximum acceptance probability, but instead a large diagonal entry \(\Gamma_z=\bra zA^\ell\ket z\) of a power of the acceptance operator.
Because \(\Gamma_z\) has a succinct exact rational description, Merlin can send its value to Arthur, who uses it to map the final term of the finite history exactly back to the first.
Thus, rather than using an infinite sequence to avoid a boundary, we close a finite sequence into a cycle.

This also explains how our proof avoids the black-box barriers discussed above~\cite{Aar09,AHW25}.
The succinct description of \(\Gamma_z\) relies on the arithmetic structure of the verifier: after compiling into \(\Gtwo\), the acceptance operator has dyadic rational entries with a known common denominator.
An arbitrary quantum oracle need not satisfy any such restriction.
Thus our construction does not relativize to quantum oracles, consistent with the known oracle barriers.

\subsection{Open problems}\label{sec:open}

Our results leave analogous questions about perfect completeness and universal gate sets open for other quantum complexity classes.
Two particularly natural cases are \(\BQP\) and \(\QMA(2)\).
In particular, it remains open whether \(\BQP=\BQP_1\) (equivalently, \(\BQP=\mathsf{coRQP}\)) and whether \(\QMA(2)=\QMA_1(2)\).

Our proof does not seem to extend to either setting.
For \(\BQP\), there is no prover who can supply the exact rational value used in the cyclic consistency test, and it is unclear how the verifier could compute this value itself.
For \(\QMA(2)\), soundness only bounds acceptance probabilities on product states across the two proof registers, and therefore does not yield the operator-norm bound used in our argument.

Independently of these equalities, it remains open whether \(\BQP_1\) or \(\QMA_1(2)\) admits a universal finite gate set.
That is, does either class have a fixed gate set whose perfectly complete verifiers capture every promise problem recognizable using any other gate set?

\section{Preliminaries}\label{sec:prelim}

Unless stated otherwise, gate sets are finite, act on a constant number of qubits, and have entries approximable to \(b\) bits in time \(\poly(b)\).
A \emph{verifier} is a circuit \(V\) on a proof register \(\regX\) of \(m\) qubits and an ancilla register \(\regY\) initialized to \(\ket0\), with an output qubit whose value \(1\) means acceptance.
With \(\Pi_{\acc}\) the projector onto output \(1\), the \emph{acceptance operator} is
\begin{equation}\label{eq:acceptance}
  E=(I_{\regX}\otimes\bra0_{\regY})\,V^\dagger\Pi_{\acc}V\,(I_{\regX}\otimes\ket0_{\regY}),\qquad0\preceq E\preceq I,
\end{equation}
and a proof \(\rho\) is accepted with probability \(\Tr(E\rho)\).

\begin{samepage}
\begin{definition}\label{def:qma}
  Let \(\calG\) be a gate set and \(0\le s<c\le1\).
  A promise problem \(L=(L_{\yes},L_{\no})\) is in \(\QMA^{\calG}(c,s)\) if there is a polynomial-time algorithm that maps every input \(x\) to a verifier \(V_x\) over \(\calG\), with acceptance operator \(E_x\), such that
  \begin{enumerate}[label=\textup{(\roman*)}]
    \item if \(x\in L_{\yes}\), then \(\lmax(E_x)\ge c\), that is, some proof is accepted with probability at least \(c\);
    \item if \(x\in L_{\no}\), then \(E_x\preceq sI\), that is, every proof is accepted with probability at most \(s\).
  \end{enumerate}
  We define \(\QMA=\QMA^{\calG}(2/3,1/3)\) for any universal gate set \(\calG\), and \(\QMAone^{\calG}=\QMA^{\calG}(1,1/2)\) for every gate set \(\calG\).
\end{definition}
\end{samepage}

The class \(\QMA\) is well defined: for universal gate sets, \(\QMA^{\calG}(c,s)\) does not depend on \(\calG\) or on the constants \(0<s<c<1\), by the Solovay--Kitaev theorem~\cite{Kit97,DN05,BG21} and amplification~\cite{KSV02,MW05}.
For the same reason, \(\QMA^{\calG}(c,s)\subseteq\QMA\) for every gate set \(\calG\) and all constants \(0<s<c<1\), and in particular \(\QMAone^{\calG}\subseteq\QMA\).
The perfectly complete classes are indexed by the gate set, because the argument does not apply when \(c=1\).
For gate sets implementing \(\Gtwo\) exactly, any constant soundness in \((0,1)\) gives the same perfectly complete class by standard parallel repetition: running \(r\) copies and accepting only if all accept preserves perfect completeness and reduces soundness from \(s\) to \(s^r\).
Classical reversible arithmetic on polynomially many bits, with workspace returned to zero, is available over \(\{X,\CNOT,\Toffoli\}\)~\cite{Ben73}.

\section{Proof of perfect completeness}\label{sec:proof}

We prove \cref{thm} in three steps.
First, in \cref{sec:dyadic}, we compile the original \(\QMA\) verifier into the gate set \(\Gtwo\).
The resulting acceptance operator \(E\) has a controlled dyadic denominator: there is an integer \(h=\poly(n)\) such that \(2^hE\) is an integer matrix.
This will imply that the quantities $\Gamma_z=\bra z(2E)^\ell\ket z$ have succinct exact descriptions.
Next, in \cref{sec:cycle}, we use a claimed value of \(\Gamma_z\) to construct a cyclic consistency operator \(C_{z,\Gamma}\).
On YES instances, the honest history state lies exactly in the kernel of \(I-C_{z,\Gamma}\), while on NO instances \(I-C_{z,\Gamma}\) maps every normalized state to a vector of norm at least \(1/3\).
Finally, in \cref{sec:verifier}, we show how to test this constraint exactly over \(\Gtwo\) using a block encoding of \(I-C_{z,\Gamma}\).

\subsection{Dyadic verifiers}\label{sec:dyadic}

We begin by compiling every $\QMA$ verifier into the gate set $\Gtwo$.

\begin{lemma}\label{lem:G2}
  \(\QMA=\QMA^{\Gtwo}(2/3,1/3)\).
\end{lemma}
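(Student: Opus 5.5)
The inclusion \(\QMA^{\Gtwo}(2/3,1/3)\subseteq\QMA\) is immediate from the discussion after \cref{def:qma}: every verifier over a fixed gate set lies in \(\QMA\). So the plan concentrates on the reverse inclusion. The idea is to use the fact that \(\Gtwo\) is computationally universal for real quantum computation. Given a \(\QMA\) verifier, we first approximate it, to sufficiently high precision, by a circuit over the gates \(\{H,\Toffoli\}\). This gate set is known to be universal for \(\BQP\) in the sense of encoded (real-amplitude) universality (Shi; Aharonov). We then convert single Hadamards into \(H\otimes H\) gates.

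The key steps are as follows.

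\textbf{Step 1 (real encoding).} Simulate the original verifier \(V\) over any universal gate set by a real circuit \(V'\). Adjoin one extra qubit \(r\) encoding the real/imaginary part, using \(\ket{\psi}\mapsto \mathrm{Re}\ket\psi\ket0+\mathrm{Im}\ket\psi\ket1\). The acceptance probability of a real proof then equals \(\Tr(E\rho)\) averaged appropriately. The standard fact is that \(\lmax\) of the real acceptance operator equals \(\lmax(E)\): a complex eigenvector \(\ket\psi\) maps to an eigenvector of the realified operator with the same eigenvalue, and the realified operator is unitarily equivalent to \(E\oplus\overline{E}\). Hence both completeness and soundness are preserved exactly.

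\textbf{Step 2 (approximation).} Amplify first to completeness \(1-\epsilon\) and soundness \(\epsilon\), using Marriott--Watrous \cite{MW05} so the proof length is unchanged. Then apply Solovay--Kitaev \cite{Kit97,DN05} within the real group \(SO\), using the gate set \(\{H,\Toffoli\}\) on encoded qubits; this set is dense in the orthogonal group on \(\ge3\) qubits. This is polynomial-size with error \(\epsilon\) in operator norm, so \(\lmax\) and \(\norm{E}\) move by at most \(2\epsilon\), keeping the constants \(2/3,1/3\).

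\textbf{Step 3 (pairing Hadamards).} Replace each \(H\) acting on qubit \(q\) by \(H\otimes H\) acting on \(q\) and on a fresh ancilla initialized to \(\ket0\) and never touched otherwise. Since that ancilla never interacts with anything else, it does not affect the acceptance probability. Alternatively, use a single dedicated "garbage" qubit: the \(H\) gates applied to it are unitaries on a qubit that nothing else depends on, so the reduced dynamics of the remaining qubits is unchanged. \(X\), \(\CNOT\), and \(\Toffoli\) are already in \(\Gtwo\). Ancillas start in \(\ket0\) as the verifier model requires.

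The main obstacle is Step 2. Solovay--Kitaev needs a gate set generating a dense subgroup with inverses in the set. Here \(H\) and \(\Toffoli\) are self-inverse, which is fine, but density of \(\langle H,\Toffoli\rangle\) in \(SO(2^k)\) holds only on a fixed number of qubits. I would handle this by invoking the known density result (Shi/Aharonov) for a constant-size block, and then compiling each real two-qubit gate of \(V'\) on that block via Solovay--Kitaev with \(\mathrm{polylog}(1/\epsilon)\) gates. The \(\mathrm{polylog}\) bit-precision assumption on gate entries ensures the compilation runs in polynomial time.
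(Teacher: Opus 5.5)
Your argument is correct in outline, but it takes a more roundabout route than the paper.

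\textbf{The paper's route.} The paper never approximates anything. It starts from a verifier over Kitaev's gate set $\{H,\Lambda(P)\}$ with $P=\operatorname{diag}(1,i)$. This is allowed because $\QMA$ is by definition independent of the universal gate set, so the Solovay--Kitaev step is already absorbed into the definition. It then realifies exactly as in your Step~1, placing the extra qubit in the proof register. It observes that $\calR(H)=I\otimes H$, while $\calR(\Lambda(P))$ is a doubly controlled $XZ$: a matrix with entries in $\{0,\pm1\}$, hence exactly implementable over $\Gtwo$. Pairing Hadamards with an idle ancilla, as in your Step~3, finishes. Every step is exact, so the constants $2/3,1/3$ survive untouched and no amplification is needed.

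\textbf{Your route.} Your Step~2 instead compiles approximately inside the real orthogonal group. That is why you must amplify first. It is also why you need both a density theorem for real gates and a Solovay--Kitaev theorem for $SO(d)$; the latter is valid, since $SO(d)$ is semisimple for $d\ge3$. The gain is that you can start from an arbitrary universal gate set and need no exact-synthesis result. The cost is that the density of $\langle H,\Toffoli\rangle$ carries the whole argument. You should cite it to Shi, not Aharonov. Aharonov's note is precisely the exact real simulation of $\{H,\Lambda(P)\}$ that the paper uses, and invoking it would make your Step~2 unnecessary.

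\textbf{Points to tighten.}
\begin{enumerate}
  \item State explicitly that the realification qubit $r$ belongs to the proof register. If it were an ancilla fixed to $\ket0$, the acceptance operator would be the block $\operatorname{Re}E=(E+\overline{E})/2$. Its top eigenvalue can be strictly smaller than $\lmax(E)$, so completeness could fail.
  \item After realification, each two-qubit gate becomes a three-qubit real gate acting also on $r$, lying in $SO(8)$. Your compilation blocks must therefore include $r$.
  \item $\langle H,\Toffoli\rangle$ contains determinant $-1$ elements (Toffoli is a transposition). The relevant closure is therefore $O(8)$, and the Solovay--Kitaev step should be set up on the appropriate coset. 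This is harmless but should be said.
\end{enumerate}
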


\begin{proof}
  We follow the real simulation of Aharonov~\cite{Aha03}, with the only change that the added qubit becomes part of the proof register.
  Start from a \(\QMA\) verifier with completeness \(2/3\) and soundness \(1/3\) over the universal gate set \(\{H,\Lambda(P)\}\), where \(P=\operatorname{diag}(1,i)\) and \(\Lambda(P)\) is the controlled-\(P\) gate~\cite{Kit97,Aha03}.
  Replace each gate \(U\) by its real representation
  \begin{equation}\label{eq:real}
    \calR(U)=\begin{pmatrix}\operatorname{Re}U&-\operatorname{Im}U\\\operatorname{Im}U&\operatorname{Re}U\end{pmatrix},
  \end{equation}
  where the block index is one additional qubit \(c\) that is added to the proof register.
  The map \(\calR\) is multiplicative and preserves adjoints, and the output projector and the projection onto \(\ket0_{\regY}\) are real.
  Hence the new acceptance operator is \(\calR(E)\).
  If \(Ev=\lambda v\), then \((\operatorname{Re}v,\operatorname{Im}v)\) and \((-\operatorname{Im}v,\operatorname{Re}v)\) are eigenvectors of \(\calR(E)\) with eigenvalue \(\lambda\).
  An orthonormal eigenbasis of \(E\) gives an orthonormal eigenbasis of \(\calR(E)\) in this way, so the eigenvalues have doubled multiplicity and completeness and soundness are unchanged.
  The gate \(\calR(H)=I_c \otimes H\) is a Hadamard gate, and \(\calR(\Lambda(P))\) has entries in \(\{0,\pm1\}\), so it has an exact circuit over \(\Gtwo\)~\cite[Theorem~1]{AGKMMR24}.
  Finally, replace each Hadamard gate by \(H\otimes H\) with the second factor on one ancilla qubit.
  The ancilla is ignored, so the acceptance operator is unchanged, and the circuit is over \(\Gtwo\).
\end{proof}

\begin{corollary}\label{cor:dyadic}
  If \(V\) is a verifier over \(\Gtwo\) with \(g\) gates \(H\otimes H\), then \(2^{2g}E\) is an integer matrix.
\end{corollary}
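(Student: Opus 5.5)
We prove the statement by tracking denominators through the expression \eqref{eq:acceptance} for \(E\).

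First, every gate in \(\Gtwo\) has a simple entry structure. The gates \(X\), \(\CNOT\), and \(\Toffoli\) are permutation matrices with entries in \(\{0,1\}\). The gate \(H\otimes H\) equals \(\tfrac12 M\), where \(M\) is the \(4\times4\) Hadamard-sign matrix with entries in \(\{\pm1\}\). Tensoring any of these with identities on the remaining qubits keeps the same form: an integer matrix times \(1\) or times \(\tfrac12\). Writing \(V=U_T\cdots U_1\), it follows that \(V=2^{-g}W\), where \(W=W_T\cdots W_1\) is a product of integer matrices and hence integer.

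Second, the inverse \(V^\dagger\) has the same structure. Each gate is real, so \(V^\dagger=V^{\mathsf T}=2^{-g}W^{\mathsf T}\), and \(W^{\mathsf T}\) is an integer matrix. Substituting into \eqref{eq:acceptance} gives
\[
  2^{2g}E=(I_{\regX}\otimes\bra0_{\regY})\,W^{\mathsf T}\Pi_{\acc}W\,(I_{\regX}\otimes\ket0_{\regY}).
\]

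Third, the remaining factors preserve integrality. \(\Pi_{\acc}\) is a diagonal \(0/1\) matrix. Compressing to \(\ket0_{\regY}\) only selects a submatrix indexed by the basis states in which \(\regY\) is all zeros. Both operations keep a matrix integer, so \(2^{2g}E\) is an integer matrix.

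There is no real obstacle. The one point needing care is the bookkeeping of the factor \(2\): each \(H\otimes H\) gate contributes \(\tfrac12\) once in \(V\) and once in \(V^\dagger\), which is where the exponent \(2g\) comes from.
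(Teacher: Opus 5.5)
Your proposal is correct and follows essentially the same argument as the paper: every non-\(H\otimes H\) gate is a permutation matrix and \(H\otimes H\) contributes a factor \(\tfrac12\), so \(2^gV\) is an integer matrix, and the \(\tfrac12\) is picked up once from \(V\) and once from \(V^\dagger\) in \(E\), giving \(2^{2g}E\in\Z^{2^m\times2^m}\). The paper phrases this slightly more compactly via \(E=W^\dagger\Pi_{\acc}W\) with \(W=V(I\otimes\ket0_{\regY})\), but the content is the same.
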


\begin{proof}
  Every gate other than \(H\otimes H\) is a permutation matrix and \(H\otimes H\) has entries \(\pm\frac12\), so \(2^gV\) is an integer matrix.
  By \eqref{eq:acceptance}, \(E=W^\dagger\Pi_{\acc}W\) with \(W=V(I\otimes\ket0_{\regY})\).
  Since \(2^gW\) is an integer matrix, so is \(2^{2g}E\).
\end{proof}

The point of the preceding compilation is to obtain exact arithmetic control over the acceptance operator.
From now on, let \(V\) be the verifier of \cref{lem:G2} for a fixed input, with \(m\) proof qubits and acceptance operator \(E\).
If \(g\ge1\) is the number of \(H\otimes H\) gates in \(V\), then, by \cref{cor:dyadic}, $2^hE\in\Z^{2^m\times2^m}$ where $h=2g$.
Thus \(E\) has a known dyadic denominator of polynomial size, which will allow Merlin to describe certain entries of powers of \(E\) exactly using only polynomially many bits.

Finally, let \(A=2E\).
Then \(A\succeq0\), and the original completeness--soundness gap becomes
\[
  \lambda_{\max}(A)\ge\frac43
  \quad\text{on YES instances},\qquad
  \norm A\le\frac23
  \quad\text{on NO instances}.
\]
This rescaling will let us amplify the two cases in opposite directions by taking powers of \(A\).

\subsection{A cyclic constraint}\label{sec:cycle}

We now construct the cyclic consistency constraint that will underlie the perfectly complete verifier.
We begin by using the arithmetic structure of $E$ to construct the classical part of Merlin's proof. 

Let \(\ell\) be the least integer with \(\ell\ge4(m+1)\) such that \(\ell+1\) is a power of two, so \(\ell<8(m+1)\).
The \emph{clock register} \(\regC\) has \(\log_2(\ell+1)\) qubits with basis \(\ket0,\dots,\ket\ell\), and \(S\ket t=\ket{t+1\bmod(\ell+1)}\) is the cyclic shift on \(\regC\).

\begin{lemma}\label{lem:entry}
  For every \(z\in\{0,1\}^m\), the number \(\Gamma_z=\bra zA^\ell\ket z\) satisfies \(0\le\Gamma_z\le2^\ell\) and \(2^{h\ell}\Gamma_z\in\Z\).
  In particular, \(\Gamma_z\) is a rational number whose numerator and denominator have at most \((h+1)\ell=\calO(h\ell)\) bits.
  On a YES instance, some \(z\) has \(\Gamma_z>2\).
\end{lemma}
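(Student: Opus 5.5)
The plan is to read all three claims off two facts: $A=2E$ is positive semidefinite with $A\preceq 2I$, and $2^hE$ is an integer matrix by \cref{cor:dyadic}.

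\emph{Bounds.} Since $0\preceq E\preceq I$, we have $0\preceq A\preceq 2I$. Hence $A^\ell$ is positive semidefinite with $\norm{A^\ell}\le 2^\ell$. For the unit vector $\ket z$ this gives
\[
0\le\Gamma_z=\bra zA^\ell\ket z\le 2^\ell .
\]

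\emph{Integrality.} Write $A=2E=2^{1-h}M$ with $M=2^hE\in\Z^{2^m\times 2^m}$. Then $A^\ell=2^{\ell-h\ell}M^\ell$, so
\[
2^{h\ell}\Gamma_z=2^{\ell}\bra zM^\ell\ket z\in\Z ,
\]
using that $\ell\ge 0$ and $M^\ell$ is an integer matrix. (Here $g\ge1$ ensures $h\ge1$, though this is not needed.)

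\emph{Bit length.} We write $\Gamma_z=N/2^{h\ell}$ with $N=2^{h\ell}\Gamma_z$ an integer. From $0\le N\le 2^{h\ell}2^\ell=2^{(h+1)\ell}$, the numerator has at most $(h+1)\ell+1$ bits and the denominator $2^{h\ell}$ has $h\ell+1$ bits. This is $\calO(h\ell)$; the paper's \enquote{at most $(h+1)\ell$} is this count up to the additive $+1$. Since $h=\poly(n)$ and $\ell=\calO(m)$, the description has polynomial size.

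\emph{YES case.} Here $\lmax(A)\ge 4/3$. Because $A\succeq0$, all eigenvalues of $A^\ell$ are nonnegative, so
\[
\Tr(A^\ell)=\sum_i\lambda_i(A)^\ell\ge(4/3)^\ell .
\]
Writing the trace in the computational basis, $\Tr(A^\ell)=\sum_{z\in\{0,1\}^m}\Gamma_z$, so by averaging some $z$ has $\Gamma_z\ge 2^{-m}(4/3)^\ell$. It remains to show $(4/3)^\ell>2^{m+1}$, i.e.\ $\ell\log_2(4/3)>m+1$. Since $\log_2(4/3)\approx0.415>1/4$ and $\ell\ge4(m+1)$, we get $\ell\log_2(4/3)>(m+1)$, so $\Gamma_z>2$.

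The only step needing care is this last numeric inequality, the constant check $\log_2(4/3)>1/4$, which is equivalent to $(4/3)^4=256/81>2$. Everything else is immediate from \cref{cor:dyadic} and $0\preceq E\preceq I$.
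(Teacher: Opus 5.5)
Your proof is correct and follows the paper's argument essentially line by line: integrality from $2^{h\ell}A^\ell=2^\ell(2^hE)^\ell$, the bound $0\le\Gamma_z\le\norm{A}^\ell\le 2^\ell$, and averaging the trace, $\max_z\Gamma_z\ge 2^{-m}\Tr(A^\ell)\ge 2^{-m}(4/3)^{4(m+1)}>2$ using $(4/3)^4>2$. Your off-by-one caveat on bit length is harmless: in lowest terms, and using $h\ge1$, the stated bound $(h+1)\ell$ holds exactly. Moreover, the later use in \cref{lem:certificate} only needs $2^{h\ell}\Gamma_z\le 2^{(h+1)\ell}$, which you establish.
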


\begin{proof}
  \(2^{h\ell}A^\ell=2^\ell(2^hE)^\ell\) is an integer matrix, and \(0\le\Gamma_z\le\norm A^\ell\le2^\ell\).
  On a YES instance,
  \(\max_z\Gamma_z\ge2^{-m}\Tr(A^\ell)\ge2^{-m}(4/3)^{4(m+1)}>2^{-m}\cdot2^{m+1}=2\), since \((4/3)^4>2\).
\end{proof}

\Cref{lem:entry} gives the exact scalar that will be used to construct the witness.
Consider the sequence $\ket z, A\ket z,A^2\ket z,\dots,A^\ell\ket z$.
Each vector is obtained from the previous one by applying $A$.
The key observation is that $\Gamma_z$ allows the final vector to be returned to the first.
Indeed,
\[
 \Gamma_z^{-1}\proj zA^\ell\ket z = \Gamma_z^{-1}\bra zA^\ell\ket z\,\ket z = \ket z.
\]
Thus the sequence can be closed into a cycle.

In the $\QMAone$ verification procedure, Merlin supplies $z$ together with a claimed value $\Gamma$ of $\Gamma_z$. 
Arthur uses this claimed value to define the closing transition and hence the cyclic operator in the following lemma. 
The resulting construction is illustrated in \cref{fig:cycle}.

\begin{figure}[t]
  \centering
  \begin{tikzpicture}[>=Stealth,node distance=17mm,
      st/.style={draw,rounded corners=2pt,minimum width=13mm,minimum height=7mm,inner sep=2pt}]
    \node[st] (t0) {\(\ket0\otimes\ket z\)};
    \node[st,right=of t0] (t1) {\(\ket1\otimes A\ket z\)};
    \node[st,right=of t1] (t2) {\(\ket2\otimes A^2\ket z\)};
    \node[right=9mm of t2] (dots) {\(\vphantom{+}\cdots\)};
    \node[st,right=9mm of dots] (tl) {\(\ket\ell\otimes A^\ell\ket z\)};
    \draw[->] (t0) -- node[above] {\(A\)} (t1);
    \draw[->] (t1) -- node[above] {\(A\)} (t2);
    \draw[->] (t2) -- node[above] {\(A\)} (dots);
    \draw[->] (dots) -- node[above] {\(A\)} (tl);
    \draw[->] (tl.south) to[out=-90,in=-90,looseness=0.45] node[below] {\(\Gamma^{-1}\proj z\)} (t0.south);
  \end{tikzpicture}
  \caption{The cyclic constraint.
  Each arrow is one transition of \(C_{z,\Gamma}\); the clock advances by one modulo \(\ell+1\) at every transition.
  The history state \(\ket{\Omega_z}\) is the sum of the boxed components and is a fixed point of \(C_{z,\Gamma}\) when \(\Gamma=\Gamma_z\).
  On a NO instance every arrow is a contraction of norm at most \(2/3\).}
  \label{fig:cycle}
\end{figure}

\begin{lemma}\label{lem:kernel-gap}
  For \(z\in\{0,1\}^m\) and \(\Gamma\ge2\), define the operator \(C_{z,\Gamma}\) and history vector \(\ket{\Omega_z}\) on \(\regC\otimes\regX\) by
  \begin{equation}\label{eq:C}
    \begin{aligned}
      C_{z,\Gamma}&=\sum_{t=0}^{\ell-1}\ketbra{t+1}{t}\otimes A+\ketbra0\ell\otimes\Gamma^{-1}\proj z,\\
      \ket{\Omega_z}&=\sum_{t=0}^{\ell}\ket t\otimes A^t\ket z.
    \end{aligned}
  \end{equation}
  \begin{enumerate}[label=\textup{(\roman*)}]
    \item On a YES instance, if \(\Gamma_z>2\) and \(\Gamma=\Gamma_z\), then
    \(\ket{\Omega_z}\ne0\) satisfies \((I-C_{z,\Gamma})\ket{\Omega_z}=0\).
    \item On a NO instance, \(\norm{(I-C_{z,\Gamma})\ket\Psi}\ge\frac13\norm{\ket\Psi}\) for all \(z\), all \(\Gamma\ge2\), and all \(\ket\Psi\).
  \end{enumerate}
\end{lemma}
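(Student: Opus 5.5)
For part (i) I will compute $C_{z,\Gamma}\ket{\Omega_z}$ term by term. The first sum maps $\ket t\otimes A^t\ket z$ to $\ket{t+1}\otimes A^{t+1}\ket z$ for $t=0,\dots,\ell-1$, which reproduces the components of $\ket{\Omega_z}$ at clock values $1,\dots,\ell$. The closing term acts only on the $\ket\ell$ component. It gives
\[
\ket0\otimes\Gamma^{-1}\proj z A^\ell\ket z=\ket 0\otimes\Gamma_z^{-1}\Gamma_z\ket z=\ket0\otimes\ket z,
\]
using $\Gamma=\Gamma_z$, and $\Gamma_z\ne0$ holds since $\Gamma_z>2$. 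Summing the two contributions gives $C_{z,\Gamma}\ket{\Omega_z}=\ket{\Omega_z}$. The vector is nonzero because its clock-$0$ component is $\ket z$.

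For part (ii) I will bound $\norm{C_{z,\Gamma}}$ and then apply the reverse triangle inequality:
\[
\norm{(I-C)\ket\Psi}\ge\norm{\ket\Psi}-\norm{C}\norm{\ket\Psi}.
\]
It therefore suffices to show $\norm{C_{z,\Gamma}}\le 2/3$.

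To get this bound, write $C_{z,\Gamma}=(S\otimes I)\,D$, where $S$ is the cyclic shift on $\regC$ and
\[
D=\sum_{t=0}^{\ell-1}\proj t\otimes A+\proj\ell\otimes\Gamma^{-1}\proj z
\]
is block diagonal in the clock basis. I check this factorization on each block: for $t<\ell$, $S\ket t=\ket{t+1}$, and $S\ket\ell=\ket0$. Since $S\otimes I$ is unitary, $\norm C=\norm D=\max\{\norm A,\Gamma^{-1}\}$. On a NO instance $\norm A\le2/3$, and $\Gamma\ge2$ gives $\Gamma^{-1}\le1/2$. Hence $\norm C\le 2/3$, and so $\norm{(I-C)\ket\Psi}\ge\frac13\norm{\ket\Psi}$.

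No step is a real obstacle. The only point needing care is the factorization through the unitary shift, which rigorously justifies that the norm of $C$ equals the maximum of the block norms even though the blocks sit on off-diagonal clock positions.
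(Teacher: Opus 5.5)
Your proposal is correct and follows essentially the same route as the paper: in part (i), $C_{z,\Gamma}$ cyclically permutes the summands of $\ket{\Omega_z}$ because $\Gamma_z^{-1}\proj z A^\ell\ket z=\ket z$. In part (ii), the factorization $C_{z,\Gamma}=(S\otimes I)D$ with $D$ block diagonal gives $\norm{C_{z,\Gamma}}=\max\{\norm A,\Gamma^{-1}\}\le 2/3$, followed by the reverse triangle inequality; your remark that $\ket{\Omega_z}\neq0$ because its clock-$0$ component is $\ket z$ just makes explicit a detail the paper leaves implicit.
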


\begin{proof}
  (i) Since \(\Gamma^{-1}\proj zA^\ell\ket z=\ket z\), the operator \(C_{z,\Gamma}\) cyclically permutes the summands of \(\ket{\Omega_z}\), leaving their sum fixed.
  (ii) Factor
  \begin{equation}\label{eq:C-factor}
    C_{z,\Gamma}=(S\otimes I)\Bigl(\sum_{t=0}^{\ell-1}\proj t\otimes A+\proj\ell\otimes\Gamma^{-1}\proj z\Bigr).
  \end{equation}
  The first factor is unitary and the second is block diagonal with blocks \(A\) and \(\Gamma^{-1}\proj z\), so \(\norm{C_{z,\Gamma}}=\max\{\norm A,\Gamma^{-1}\}\le2/3\).
  Hence \(\norm{(I-C_{z,\Gamma})\ket\Psi}\ge\norm{\ket\Psi}-\norm{C_{z,\Gamma}\ket\Psi}\ge\frac13\norm{\ket\Psi}\).
\end{proof}

\subsection{The \texorpdfstring{\(\QMAone\)}{QMA1} verifier}\label{sec:verifier}

By \cref{lem:kernel-gap}, it remains to test whether the purported history state lies in the kernel of \(I-C_{z,\Gamma}\).
We implement this test using a block encoding of \(I-C_{z,\Gamma}\).
The block encoding is constructed so that a designated measurement outcome occurs with probability proportional to $\norm{(I-C_{z,\Gamma})\ket\Psi}^2$.
Arthur rejects on this outcome.
Thus, the honest history on a YES instance is rejected with probability exactly zero, while \cref{lem:kernel-gap}(ii) will give a constant rejection probability on NO instances.

To construct this block encoding, it is useful to separate the two types of transitions in the cycle.
Define
\begin{equation}\label{eq:transitions}
T=\sum_{t=0}^{\ell-1}\ketbra{t+1}{t}\otimes E,
\qquad
R_z=\ketbra0\ell\otimes\proj z.
\end{equation}
Here, \(T\) gives the forward transitions in the history, while \(R_z\) gives the final transition back to clock state \(\ket 0\).
Since \(A=2E\), we have
\begin{equation}\label{eq:IC-decomposition}
I-C_{z,\Gamma} = I-2T-\Gamma^{-1}R_z.
\end{equation}

There are two issues in implementing \eqref{eq:IC-decomposition} exactly.
First, we need exact block encodings of \(T\) and \(R_z\).
Second, we need to incorporate the rational coefficient \(\Gamma^{-1}\) using only gates from \(\Gtwo\).
We first explain the second point, assuming block encodings of \(T\) and \(R_z\).
Write the claimed value as \(\Gamma=a/b\), so that \(\Gamma^{-1}=b/a\).
Introduce a \(k\)-qubit index register with \(M=2^k\) basis states.
We prepare this register in the uniform superposition $\frac{1}{\sqrt M}\sum_{j=0}^{M-1}\ket j$,
and, conditioned on the value of \(j\), apply one of several operations.
For \(a\) values of \(j\), the relevant block is \(I\); for \(2a\) values it is \(-T\); and for \(b\) values it is \(-R_z\).
For all remaining values of \(j\), the relevant block is zero.

After applying \(H^{\otimes k}\) again and projecting the index register onto \(\ket0\), these contributions are averaged uniformly.
The resulting block is therefore
\[
  \frac1M\bigl(aI-2aT-bR_z\bigr)
  =
  \frac aM\bigl(I-2T-(b/a)R_z\bigr)
  =
  \frac aM\bigl(I-C_{z,a/b}\bigr).
\]
Thus the coefficient \(\Gamma^{-1}=b/a\) is realized exactly by the relative numbers of index values assigned to the different operations.

To construct block encodings of \(T\) and \(R_z\), we use the ancilla register \(\regY\) of \(V\), initialized to \(\ket0\), and a shared flag qubit \(f\), initialized to \(\ket1\).
In each case, the desired block is obtained by projecting \(\regY\) and \(f\) onto \(\ket0\) at the output.
Let \(W_T\) apply \(V\), flip \(f\) if the clock is below \(\ell\) and the output qubit of \(V\) is \(1\), and then apply \(V^\dagger\) followed by the cyclic clock shift \(S\).
Let \(W_{R_z}\) flip \(f\) if the history register is in state \(\ket\ell\otimes\ket z\), and then apply \(S\).
\Cref{fig:WT} shows these two gadgets.

\begin{figure}[htb]
  \centering
  \begin{minipage}[t]{0.60\textwidth}
    \centering
    \vspace{0pt}
    \begin{quantikz}[wire types={b,b,b,q}, classical gap=0.06cm, row sep={1cm, between origins}, column sep=0.5cm]
      \lstick{clock \(\regC\)} & & \gate[style={rounded corners=5pt}]{I-\proj\ell}\wire[d][1]{q} & & \gate{S} & \\
      \lstick{proof \(\regX\)} & \gate[2][0.85cm]{V} & \gate[2,style={rounded corners=5pt}]{\Pi_{\acc}}\wire[d][2]{q} & \gate[2][0.85cm]{V^\dagger} & & \\
      \lstick{ancilla \(\regY:\ket0\)} & & & & & \\
      \lstick{flag \(f:\ket1\)} & & \targ{} & & &
    \end{quantikz}
    \par\smallskip
    {\small (a) \(W_T\): forward transition $\ket{t}\to\ket{t+1}$}
  \end{minipage}\hfill
  \begin{minipage}[t]{0.38\textwidth}
    \centering
    \vspace{0pt}
    \begin{quantikz}[wire types={b,b,b,q}, classical gap=0.06cm, row sep={1cm, between origins}, column sep=0.5cm]
      & \gate[style={rounded corners=5pt}]{\proj\ell}\wire[d][1]{q} & \gate{S} & \\
      & \gate[style={rounded corners=5pt}]{\proj z}\wire[d][2]{q} & & \\
      & & & \\
      & \targ{} & &
    \end{quantikz}
    \par\smallskip
    {\small (b) \(W_{R_z}\): closing transition $\ket{\ell}\to\ket{0}$}
  \end{minipage}
  \caption{Block encodings of \(T\) and \({R_z}\).
   Rounded boxes denote controls on the indicated projector subspaces.}
  \label{fig:WT}
\end{figure}

\begin{definition}\label{def:U}
  For \(z\in\{0,1\}^m\) and nonnegative integers \(a,b\) with \(3a+b\le M\), define
  \begin{equation}\label{eq:U}
    U_{z,a,b}=(H^{\otimes k}\otimes I)\Bigl(\sum_{j=0}^{M-1}\proj j_{\regJ}\otimes U_j\Bigr)(H^{\otimes k}\otimes I),
    \qquad
    U_j=\begin{cases}
      X_f&j<a,\\
      -W_T&a\le j<3a,\\
      -W_{R_z}&3a\le j<3a+b,\\
      I&j\ge3a+b.
    \end{cases}
  \end{equation}
  Its ancilla register is \(\regZ=\regJ\regY f\), where \(\regJ\) has \(k\) qubits, and is initialized to \(\ket\eta_{\regZ}=\ket0_{\regJ}\ket0_{\regY}\ket1_f\).
\end{definition}

\begin{figure}[htb]
  \centering
  \begin{quantikz}[wire types={b,b,b,b,q}, classical gap=0.06cm, row sep={0.85cm,between origins}, column sep=0.4cm]
    \lstick{index \(\regJ:\ket0\)} & \gate{H^{\otimes k}} & \gate[style={rounded corners=5pt}]{j<a}\wire[d][4]{q} & \gate[style={rounded corners=5pt}]{a\le j<3a}\wire[d][1]{q} & \gate[style={rounded corners=5pt}]{3a\le j<3a+b}\wire[d][1]{q} & \gate{H^{\otimes k}} & \\[0.2cm]
    \lstick{clock \(\regC\)} & & & \gate[4]{-W_T} & \gate[4]{-W_{R_z}} & & \\
    \lstick{proof \(\regX\)} & & & & & & \\
    \lstick{ancilla \(\regY:\ket0\)} & & & & & & \\
    \lstick{flag \(f:\ket1\)} & & \targ{} & & & &
  \end{quantikz}
  \caption{The construction of \(U_{z,a,b}\) from the gadgets in \cref{fig:WT}.
  Between the two Hadamard layers, the index register coherently selects \(X_f\), \(-W_T\), or \(-W_{R_z}\).
  For \(j\ge3a+b\), all controls are inactive, giving the identity branch.}
  \label{fig:U}
\end{figure}

For the verifier, choose \(k\) to be the least even integer with \(k\ge(h+1)\ell+4\).
A \emph{valid certificate} is a triple \((z,a,b)\) with \(z\in\{0,1\}^m\) and positive integers \(a,b\) representable by $k$ bits satisfying \(M/8\le a\le M/4\) and \(2b\le a\).
These conditions ensure that the index ranges fit, \(3a+b\le\frac72a<M\), that \(\Gamma=a/b\ge2\) as required by \cref{lem:kernel-gap}, and that the prefactor \(a/M\) is at least \(1/8\).

\begin{lemma}\label{lem:block}
  For every valid certificate, \(U_{z,a,b}\) satisfies
  \[
    (\bra0_{\regZ}\otimes I)\,U_{z,a,b}\,(\ket\eta_{\regZ}\otimes I)=\frac aM\bigl(I-C_{z,a/b}\bigr).
  \]
  It has a polynomial-size circuit over \(\Gtwo\), computable in polynomial time from \(V\) and the certificate, that applies \(V\) and \(V^\dagger\) once each and uses additional workspace qubits that start and end in \(\ket0\).
\end{lemma}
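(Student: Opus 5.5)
The proof has two parts: first compute the block of each branch $U_j$, then average over the index register, and finally argue that the circuit can be built exactly over $\Gtwo$.

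\emph{Branch blocks.} For a branch unitary $W$ acting on $\regC\regX\regY f$, write $\langle W\rangle$ for its block $(\bra0_{\regY}\bra0_f\otimes I)\,W\,(\ket0_{\regY}\ket1_f\otimes I)$.
\begin{itemize}
\item For $j<a$ we have $\langle X_f\rangle=I$, since $X_f$ maps $\ket1_f$ to $\ket0_f$ and leaves $\regY$ in $\ket0$.
\item For $j\ge 3a+b$ we have $\langle I\rangle=0$, because $\braket{0}{1}_f=0$.
\end{itemize}

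For $W_T$, start from $\ket t\ket\psi\ket0_{\regY}\ket1_f$ and apply $V$. The controlled flip produces two terms, $(I-\proj\ell)\otimes\Pi_{\acc}$ tensored with $\ket0_f$, plus a remainder that keeps $f=1$. The remainder is killed by the final projection onto $\ket0_f$. Next apply $V^\dagger$ and $S$, then project $\regY$ onto $\ket0$. This gives
\[
\langle W_T\rangle=(S\otimes I)\Bigl(\sum_{t<\ell}\proj t\otimes E\Bigr)=T,
\]
using \eqref{eq:acceptance} together with the fact that $S$ acts only on the clock. The same computation gives
\[
\langle W_{R_z}\rangle=(S\otimes I)(\proj\ell\otimes\proj z)=\ketbra0\ell\otimes\proj z=R_z.
\]

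\emph{Averaging over the index.} Since the index register starts in $\ket0_{\regJ}$, we have
\[
(\bra0_{\regJ}H^{\otimes k})\Bigl(\sum_j\proj j\otimes U_j\Bigr)(H^{\otimes k}\ket0_{\regJ})=\frac1M\sum_j U_j.
\]
Applying the branch blocks and counting how many indices fall in each range gives
\[
\frac1M\bigl(aI-2aT-bR_z\bigr)=\frac aM\bigl(I-C_{z,a/b}\bigr),
\]
where the last equality is \eqref{eq:IC-decomposition} with $\Gamma=a/b$.

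\emph{Circuit over $\Gtwo$.} The main obstacle is building the index-controlled branches exactly over $\Gtwo$ while calling $V$ and $V^\dagger$ only once each. The steps are as follows.
\begin{enumerate}
\item \emph{Hadamard layers.} Since $k$ is even, $H^{\otimes k}$ is a product of $k/2$ copies of $H\otimes H$.
\item \emph{Range comparisons.} The comparisons $j<a$, $a\le j<3a$, and $3a\le j<3a+b$ are classical reversible computations with hard-coded constants $a,b$. Each writes a flag bit into workspace using $X$, $\CNOT$, and $\Toffoli$ gates \cite{Ben73}, and is uncomputed at the end.
\item \emph{The branch $X_f$.} This is a $\CNOT$ controlled on the first flag bit.
\item \emph{Signs.} The factor $-1$ on the $T$ and $R_z$ branches is a phase controlled on the index range. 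It is implemented by a $Z$ on a flag bit, built as $Z=(H\otimes H)(X\otimes I)(H\otimes H)$ with the second factor acting on an ancilla in $\ket0$. The ancilla is returned to $\ket0$, since $HXH$ restricted there is $Z$ acting on $\ket0$ and leaves it fixed up to the pairing used in \cref{lem:G2}. Alternatively, the sign can be produced by $X\,(H\otimes H)$-based phase kickback on a $\ket-$ ancilla prepared from $\ket1$.
\item \emph{Avoiding controlled $V$.} Controlling $V$ on an index range is not possible directly, and a second copy of $V$ is not allowed. Instead, apply $V$ and $V^\dagger$ unconditionally around the flip of $f$. The flip is a single Toffoli-style gate controlled jointly by the range flag for $a\le j<3a$, the condition $t<\ell$ (computed reversibly), and the output qubit. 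On the other branches $V^\dagger V=I$, so those branches are unaffected.
\item \emph{The gadget $W_{R_z}$.} Its multi-controlled flip uses the comparison of the clock with $\ell$ and of the proof register with the hard-coded $z$, both computed reversibly.
\item \emph{Controlled shift.} The shift $S$ is applied controlled on the union of the two middle ranges, as a reversible controlled increment modulo $\ell+1$. This is well defined because $\ell+1$ is a power of two.
\end{enumerate}

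Every gate above is a polynomial-size reversible classical subroutine or an $H\otimes H$ layer, and all workspace is uncomputed back to $\ket0$. This gives the claimed polynomial-time constructibility.
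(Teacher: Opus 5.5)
Your proposal is correct and matches the paper's proof essentially step for step. It uses the same flag-flip block computation for $W_T$ and $W_{R_z}$, the same averaging identity $(\bra0_{\regJ}\otimes I)U_{z,a,b}(\ket0_{\regJ}\otimes I)=M^{-1}\sum_jU_j$, and the same trick of applying $V$ and $V^\dagger$ unconditionally around a range-controlled flip. The one deviation is that you build $Z$ explicitly as $(H\otimes H)(X\otimes I)(H\otimes H)=Z\otimes I$ instead of citing \cite{AGKMMR24}, which is fine, though the auxiliary qubit is left unchanged simply because it sees $HH=I$, not for the reason you give. You should also add that $V^\dagger$ is a circuit over $\Gtwo$, since every gate of $\Gtwo$ is self-inverse and $V^\dagger$ is therefore just the reversed gate list of $V$.
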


\begin{proof}
  A flag flip controlled on a projector \(P\) has block \(P\) between flag input \(\ket1\) and output \(\ket0\).
  Together with \cref{eq:acceptance}, this gives the transition blocks
  \[
    \begin{aligned}
      (\bra0_{\regY}\bra0_f)W_T(\ket0_{\regY}\ket1_f)
        &=(S\otimes I)((I-\proj\ell)\otimes E)=T,\\
      (\bra0_{\regY}\bra0_f)W_{R_z}(\ket0_{\regY}\ket1_f)
        &=(S\otimes I)(\proj\ell\otimes\proj z)=R_z.
    \end{aligned}
  \]
  For \(j<a\), \(X_f\) flips the flag from \(\ket1\) to \(\ket0\), giving block \(I\); for \(j\ge3a+b\), the identity leaves the flag in \(\ket1\), giving block \(0\).

  Since \(H^{\otimes k}\ket0=M^{-1/2}\sum_j\ket j\), we have \((\bra0_{\regJ}\otimes I)U_{z,a,b}(\ket0_{\regJ}\otimes I)=M^{-1}\sum_jU_j\).
  Between input \(\ket0_{\regY}\ket1_f\) and output \(\ket0_{\regY}\ket0_f\), the four branches have blocks \(I\), \(-T\), \(-R_z\), and \(0\), so the claimed block is
  \[
    \frac1M(aI-2aT-bR_z)=\frac aM(I-C_{z,a/b})
  \]
  by \cref{eq:IC-decomposition}.

  We implement the operation between the two Hadamard layers in \cref{fig:U} as follows.
  First apply \(V\) to \(\regX\regY\).
  Flip \(f\) if \(a\le j<3a\), the clock is below \(\ell\), and the output qubit of \(V\) is \(1\).
  Then apply \(V^\dagger\).
  Both \(V\) and \(V^\dagger\) are applied independently of \(j\); outside \(a\le j<3a\), the flag flip is disabled, so they cancel.
  Next, flip \(f\) if \(j<a\), or if \(3a\le j<3a+b\) and the history register is in state \(\ket\ell\otimes\ket z\).
  Finally, for \(a\le j<3a+b\), apply the clock shift \(S\) and the phase flip \(\ket j_{\regJ}\mapsto-\ket j_{\regJ}\), giving the minus signs in \(-W_T\) and \(-W_{R_z}\).
  Since \(k\) is even, each \(H^{\otimes k}\) consists of \(k/2\) gates \(H\otimes H\).
  Every gate of \(\Gtwo\) is its own inverse, so \(V^\dagger\) is the reversed gate list of \(V\).
  All branch and flag conditions and the controlled clock shift use polynomial-size reversible circuits over \(\{X,\CNOT,\Toffoli\}\)~\cite{Ben73}, with workspace returned to zero.
  The only additional gate is \(Z=\operatorname{diag}(1,-1)\), applied to a qubit storing the range predicate \(a\le j<3a+b\); it has an exact implementation over \(\Gtwo\)~\cite{AGKMMR24}.
  These predicates use arithmetic on \(k\)-bit integers, so their circuit size is polynomial in \(k=\log_2 M\).
\end{proof}

With \(\regZ\) initialized to \(\ket\eta\), the verifier applies \(U_{z,a,b}\) and rejects exactly when measuring \(\regZ\) gives all zeros.
By \cref{lem:block}, a normalized history \(\ket\psi\) has rejection probability
\begin{equation}\label{eq:rejection}
  p_{\rej}=\left(\frac aM\right)^2\norm{(I-C_{z,a/b})\ket\psi}^2.
\end{equation}

Thus a kernel vector is accepted with certainty. 

\begin{samepage}
\begin{lemma}\label{lem:certificate}
  On a YES instance, there is a valid certificate with \(a/b=\Gamma_z>2\).
\end{lemma}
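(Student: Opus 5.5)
By \cref{lem:entry}, on a YES instance we can fix some \(z\in\{0,1\}^m\) with \(\Gamma_z>2\). The same lemma gives \(0<\Gamma_z\le2^\ell\) and \(2^{h\ell}\Gamma_z\in\Z\). Write \(\Gamma_z=N/D\) with \(D=2^{h\ell}\) and \(N=2^{h\ell}\Gamma_z\), a positive integer. Neither this fraction nor its reduced form need satisfy the window \(M/8\le a\le M/4\). The plan is therefore to rescale numerator and denominator by a common factor, so the ratio \(a/b=\Gamma_z\) stays exact while \(a\) lands in the window. Because \(M\) is a power of two and the denominator is a power of two, this rescaling can be done by a power of two.

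Concretely, choose the integer \(s\) with \(2^{s}N\in[M/8,M/4)\); such an \(s\) exists because the interval has the form \([2^{k-3},2^{k-2})\). Set \(a=2^sN\) and \(b=2^sD=2^{s+h\ell}\), so that \(a/b=\Gamma_z\). The main thing to check is that \(b\) is a positive integer, i.e.\ \(s+h\ell\ge0\).

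\begin{itemize}
\item Since \(a<M/4=2^{k-2}\) and \(a/b=\Gamma_z>2\), we get \(b=a/\Gamma_z<a/2<2^{k-3}\). This gives \(2b<a\), hence \(2b\le a\).
\item For integrality, since \(a\ge2^{k-3}\) and \(\Gamma_z\le2^\ell\), we have \(b=a/\Gamma_z\ge2^{k-3-\ell}\). The choice \(k\ge(h+1)\ell+4\) gives \(k-3-\ell\ge h\ell+1\ge0\). So \(b\ge1\), and being a power of two \(2^{s+h\ell}\) it is an integer.
\item The value \(a=2^sN\) is an integer in either case: if \(s\ge0\) trivially, and if \(s<0\) because \(a\ge M/8\ge1\) is a multiple of \(2^{s}\cdot\) an integer whose \(2\)-adic shifting still leaves \(a=\Gamma_z b\) with \(b\) a power of two at least \(2^{h\ell}\cdot2^{s}\). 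Alternatively, \(a=\Gamma_z\cdot 2^{h\ell}\cdot 2^{s}=N\cdot 2^{s}\), and \(2^{s+h\ell}\) integral with \(2^{h\ell}\Gamma_z=N\) yields \(a=\Gamma_z b=(N/2^{h\ell})2^{s+h\ell}=N2^{s}\). So it is cleanest to argue directly that \(s\ge0\) is forced, or simply that \(a=\Gamma_z b\) with \(b=2^{s+h\ell}\ge 2^{h\ell}\) whenever \(s\ge0\).
\end{itemize}

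To avoid this subtlety in the last point, I would instead define \(b\) first. Choose \(b\) as the power of two with \(\Gamma_z b\in[M/8,M/4)\); this is possible since multiplying by \(2\) doubles \(\Gamma_z b\). The lower bound on \(b\) shown above gives \(b\ge2^{k-3-\ell}\ge2^{h\ell}\). Hence \(a=\Gamma_z b=N\cdot(b/2^{h\ell})\) is an integer.

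Both \(a\) and \(b\) are below \(M\), so they are representable with \(k\) bits, and \((z,a,b)\) is a valid certificate with \(a/b=\Gamma_z>2\). The main obstacle is exactly this bit-budget bookkeeping, namely guaranteeing \(b\ge2^{h\ell}\) so that \(a\) is integral. It is what the choice \(k\ge(h+1)\ell+4\) is designed for.
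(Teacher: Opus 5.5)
Your proof is correct and follows essentially the same route as the paper: you rescale the pair $(2^{h\ell}\Gamma_z,\,2^{h\ell})$ by a common power of two so that $a$ lands in the window, and integrality comes from $\Gamma_z\le 2^{\ell}$ together with $k\ge(h+1)\ell+4$. The paper phrases the integrality step as $s=k-u-2>0$ for $u=\lceil\log_2 a_0\rceil\le(h+1)\ell$, which is the same bound as your $b\ge 2^{k-3-\ell}\ge 2^{h\ell+1}$. Your third bullet is muddled, but your final version (choosing $b$ first) is clean, since that lower bound forces the scaling exponent to be at least $1$ and hence $a=N\cdot b/2^{h\ell}$ is an integer.
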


\begin{proof}
  We rescale the numerator and denominator by the same power of two so that \(a\) is a constant fraction of \(M\).
  Take \(z\) with \(\Gamma_z>2\) from \cref{lem:entry}, \(b_0=2^{h\ell}\), and \(a_0=b_0\Gamma_z\in\Z\), so \(2b_0<a_0\le2^{(h+1)\ell}\).
  Let \(u=\lceil\log_2a_0\rceil\le(h+1)\ell\) and \(s=k-u-2>0\).
  Then \(a=2^sa_0\) and \(b=2^sb_0\) satisfy \(a/M=a_0/2^{u+2}\in(\frac18,\frac14]\), \(2b<a\), and \(a/b=\Gamma_z\).
\end{proof}
\end{samepage}

\mainthm*

\begin{proof}
  Let \(L\in\QMA\), and let \(V=V_x\) be the verifier of \cref{lem:G2}.
  The new verifier expects a certificate \((z,a,b)\) in a computational-basis register and a history register \(\regC\otimes\regX\).
  It measures the certificate and rejects if it is invalid.
  Otherwise it initializes \(\regZ\) to \(\ket\eta\), applies \(U_{z,a,b}\) to the history and ancilla registers, measures \(\regZ\), and rejects if and only if the outcome is all zero.
  The measured values \(z,a,b\) are used as classical parameters in the comparisons defining \(U_{z,a,b}\).

  On a YES instance, the prover sends the certificate of \cref{lem:certificate} and the normalized history \(\ket{\Omega_z}/\norm{\ket{\Omega_z}}\) from \eqref{eq:C}.
  By \cref{lem:kernel-gap}(i), the rejection probability is exactly zero, so the verifier accepts with probability one.
  On a NO instance, for every valid certificate, \eqref{eq:rejection}, \cref{lem:kernel-gap}(ii), and \(a/M\ge1/8\) give rejection probability at least \(\frac1{64}\cdot\frac19=\frac1{576}\) for every state of the history register.
  Averaging over certificate outcomes, every proof is therefore accepted with probability at most \(1-\frac1{576}\).

  By \cref{lem:block}, the result is a polynomial-size verifier over \(\Gtwo\), so \(L\in\QMA^{\Gtwo}(1,1-\frac1{576})=\QMAone^{\Gtwo}\).
\end{proof}

We next show that the equality $\QMA=\QMA_1^{\calG_2}$ relativizes to classical oracles.

\relativizationcor*

\begin{proof}
  For standard XOR oracles, the query gate is a real permutation matrix and is its own inverse.
  The real simulation in \cref{lem:G2} maps \(Q_O\) to \(I \otimes Q_O\), leaving query gates unchanged.
  Query gates introduce no denominators, so \cref{cor:dyadic} still holds, and their self-inverseness allows \(V^\dagger\) to be implemented by reversing the gate list.
  The block encoding applies \(V\) and \(V^\dagger\) without additional controls, so the rest of the construction is unchanged.

  For in-place permutation oracles, the same construction works if inverse queries are temporarily allowed.
  Controlled oracle queries are available by~\cite[full version, Lemma~2.7 and Remark~2.8]{MPR26}.
  To remove inverse queries, let the prover supply the standard Feynman--Kitaev circuit history of the verifier~\cite{KSV02}.
  An inverse step has the form \(\ket\alpha\ket t\mapsto\ket\beta\ket{t+1}=(O^\dagger\ket\alpha)\ket{t+1}\).
  We can check this relation using a forward query instead: apply \(O\) conditioned on the clock being \(t+1\), then perform the identity-gate propagation check between clock times \(t\) and \(t+1\).
  This gives exactly the usual propagation penalty, since
  \(\norm{\ket\alpha-O\ket\beta}^2=\norm{O^\dagger\ket\alpha-\ket\beta}^2.\)
  With this observation, the proof of~\cite[full version, Lemma~3.11]{MPR26} gives inverse queries for free.
  Here the history state records the individual gates of the verifier, including oracle queries and their inverses; the history in \eqref{eq:C} records successive powers of \(A\).
  The additional non-oracle gates have exact implementations over \(\Gtwo\)~\cite{AGKMMR24}.
\end{proof}

\pdfbookmark[1]{Acknowledgments}{acknowledgments}
\paragraph*{Acknowledgments.}
SG is supported by the Herman Goldstine Memorial Postdoctoral Fellowship.
DR is supported by the Deutsche Forschungsgemeinschaft (DFG) under project 563388236 (Bridge-QS, SPP 2514).

\pdfbookmark[1]{Tool and computational resource disclosure}{tool-disclosure}
\paragraph*{Tool and computational resource disclosure.}
Generative AI was used extensively in the development of this paper, including in exploring and refining proof ideas and in drafting and revising the text. In particular, the proof idea underlying the main theorem is the result of prompting a generative AI system, which proposed the core argument developed in this paper. The authors subsequently verified, simplified, and organized the argument, and edited its exposition. They take full responsibility for the correctness and content of the paper. This disclosure is intended solely to describe the role of generative AI in the research process and should not be read as an endorsement of such systems or of the companies that produce them.

\printbibliography

\end{document}